\renewcommand{\le}{\leqslant}
\renewcommand{\ge}{\geqslant}
\newcommand{\ol}{\overline}
\newcommand{\eps}{\varepsilon}
\newcommand{\emp}{\emptyset}
\newcommand{\Sig}{\Sigma}
\newcommand{\sig}{\sigma}
\newcommand{\noin}{\noindent}
\newcommand{\bi}{\begin{itemize}}
\newcommand{\ei}{\end{itemize}}
\newcommand{\be}{\begin{enumerate}}
\newcommand{\ee}{\end{enumerate}}
\newcommand{\bd}{\begin{description}}
\newcommand{\ed}{\end{description}}
\newcommand{\bq}{\begin{quote}}
\newcommand{\eq}{\end{quote}}
\newcommand{\cA}{{\mathcal A}}
\newcommand{\cD}{{\mathcal D}}
\newcommand{\cN}{{\mathcal N}}
\newcommand{\cP}{{\mathcal P}}
\newcommand{\cT}{{\mathcal T}}
\newcommand{\one}{{\mathbf 1}}
\newcommand{\lraL}{{\hspace{.1cm}{\leftrightarrow_L} \hspace{.1cm}}}
\newtheorem{theorem}{Theorem}
\theoremstyle{definition}
\newtheorem{definition}{Definition}
\newtheorem{example}{Example}
\theoremstyle{remark}
\newtheorem{remark}{Remark}
\DeclareMathOperator{\lcm}{lcm} 
\title{Most Complex Regular Ideal Languages}
\author{Janusz~Brzozowski\affiliationmark{1}\thanks{This work was supported by the Natural Sciences and Engineering Research Council of Canada under grant No.~OGP0000871.}
\and Sylvie Davies\affiliationmark{2} \and Bo Yang Victor Liu\affiliationmark{1}} 
\affiliation{David R. Cheriton School of Computer Science, University of Waterloo \\
Department of Pure Mathematics, University of Waterloo}
\keywords{atom, basic operations,  ideal, most complex,  quotient, regular language,   state complexity,  syntactic semigroup, universal witness}
\begin{document}
\publicationdetails{18}{2016}{3}{15}{1343}

\maketitle
\begin{abstract}
A right ideal (left ideal, two-sided ideal) is a non-empty language $L$ over an alphabet $\Sigma$ 
such that
$L=L\Sigma^*$ ($L=\Sigma^*L$, $L=\Sigma^*L\Sigma^*$).
Let $k=3$ for right ideals, 4 for left ideals and 5 for two-sided ideals.
We show that there exist sequences ($L_n \mid n \ge k $) of  right, left, and two-sided regular ideals, where $L_n$ has quotient complexity (state complexity) $n$, such that 
$L_n$ is most complex in its class under the following measures of complexity: 
the size of the syntactic semigroup,
the quotient complexities of the left quotients of $L_n$,  
the number of atoms (intersections of complemented and uncomplemented left quotients), 
the quotient complexities of the atoms,  
and the quotient complexities  of 
reversal, 
star,  
product (concatenation), 
and all binary boolean operations.
In that sense, these ideals are ``most complex'' languages in their classes, or ``universal witnesses'' to the complexity of the various operations.
\end{abstract}

\section{Introduction}
\label{sec:introduction}

We begin informally, postponing definitions until Section~\ref{sec:background}.
In~\cite{Brz13} Brzozowski introduced a list of conditions that a regular language should satisfy in order to be called ``most  complex'', and found a sequence $(U_n \mid n \ge 3)$ of  regular languages with quotient/state complexity $n$ that have the smallest possible alphabet and meet all of these conditions~\cite{Brz13}.
Namely, the languages $U_n$ meet the upper bounds for 
the size of the syntactic semigroup,
the quotient complexities of left quotients,
the number of atoms (intersections of complemented and uncomplemented left quotients),
the quotient complexities of the atoms, 
and the quotient complexities of  the following operations:
reversal, 
star, 
product (concatenation), 
and all binary boolean operations.
In this sense the languages in this sequence are most complex when compared to other regular languages of the same quotient complexity.
However, these  ``universal witnesses''
cannot be used when studying the complexities listed above in subclasses of regular languages, since they generally lack the properties of those classes.

This paper is part of an ongoing project to investigate whether the approach used for general regular languages can be extended to subclasses. 
We present sequences of most complex languages for the classes of right, left, and two-sided regular  ideals. 
Right ideals were chosen as an initial ``test case'' for this project due to their simple structure; we were able to obtain a sequence of most complex right ideals by making small modifications to the sequence $(U_n \mid n \ge 3)$.
A preliminary version of our results about right ideals appeared in~\cite{BrDa14}. 
The sequences of witnesses for left and two-sided ideals are more complicated, and first appeared in~\cite{BrYe11} where they were conjectured to have syntactic semigroups of maximal size; this was later proved in~\cite{BSY15}.
Our main new result is a demonstration that these sequences are in fact most complex.
It has been shown in~\cite{BrSz15a} that a most complex sequence does not exist for the class of suffix-free languages.

Having a single sequence of witnesses for all the complexity measures is useful when one needs to test systems that perform operations on regular languages and finite automata: to determine the sizes of the largest automata that can be handled by the system, one can use the same sequence of witnesses for all the operations.

For a further discussion of regular ideals see~\cite{BrDa14,BJL13,BrSz14,BSY15,BrYe11}.
It was pointed out in~\cite{BJL13} that 
ideals deserve to be studied for several reasons. 
They are fundamental objects in semigroup theory. They appear in the theoretical computer science literature 
as early as 1965,
and continue to be of interest; an overview of historical and recent work on ideal languages is given in~\cite{BJL13}.
Besides being of theoretical interest, ideals also play a role in algorithms for pattern matching: 
For example, when searching for all words ending in a word from some set $L$, one is looking for all the words of the left ideal $\Sig^*L$.
Additional examples of the use of ideals in applications can be found in~\cite{AhCo75,CrHa90,CHL07,YCDLK06}. 

\section{Background}
\label{sec:background}

A \emph{deterministic finite automaton (DFA)} $\cD= (Q,\Sig,\delta,q_1,F)$ consists of 
a finite non-empty set $Q$ of \emph{states},
a finite non-empty \emph{alphabet} $\Sig$,  
a \emph{transition function} $\delta\colon Q\times \Sig\to Q$, an
\emph{initial state} $q_1\in Q$, and 
a set $F\subseteq Q$ of \emph{final states}.
The transition function is extended to functions $\delta'\colon Q \times \Sig^*\to Q$ and $\delta''\colon 2^Q \times \Sig^*\to 2^Q$ as usual, but these extensions are  also denoted by $\delta$. 
State $q \in Q$  is \emph{reachable} if there is a word $w\in\Sig^*$ such that $\delta(q_1,w)=q$. 
The \emph{language  accepted} by $\cD$ is $L(\cD)=\{w\in\Sig^* \mid \delta(q_1,w)\in F\}$.
Two DFAs are \emph{equivalent} if their languages are equal.
The \emph{language of a state} $q$ is the language accepted by the DFA 
$(Q,\Sig,\delta,q,F)$.
Two states are \emph{equivalent} if their languages are equal; otherwise, they are \emph{distinguishable} by some word that is in the language of one of the states, but not the other. 
A~DFA is \emph{minimal} if all of its states are reachable and no two states are equivalent.
A state is \emph{empty} if its language is empty.

A \emph{nondeterministic finite automaton (NFA)} is a tuple $\cN= (Q,\Sig,\eta,Q_I,F)$, where 
$Q$, $\Sig$, and $F$ are as in a DFA, $\eta\colon Q\times \Sig\to 2^Q$ is the transition function 
and $Q_I\subseteq Q$ is the \emph{set of initial states}.
An \emph{$\eps$-NFA} has all the features of an NFA but its transition function 
$\eta\colon Q\times (\Sig\cup \{\eps\})\to 2^Q$ allows also transitions under the empty word $\eps$. 
The \emph{language accepted} by an NFA or an $\eps$-NFA is the set of words $w$ for which there exists a sequence of transitions such that the concatenation of the symbols causing the transitions is $w$,
and this sequence leads from a state in $Q_I$ to a state in $F$.
Two NFAs are \emph{equivalent} if they accept the same language.

Without loss of generality we use the set $Q_n=\{1,2,\dots,n\}$ as the set of states for our automata.
A \emph{transformation} of $Q_n$ is a mapping of $Q_n$ into itself.
We denote the image of a state $q$ under a transformation $t$ by $qt$.
An arbitrary transformation of $Q_n$ can be written as
\goodbreak

\begin{equation*}\label{eq:transmatrix}
t=\left( \begin{array}{ccccc}
1 & 2 &   \cdots &  n-1 & n \\
p_1 & p_2 &   \cdots &  p_{n-1} & p_n
\end{array} \right ),
\end{equation*}
where $p_q = qt$,  $1\le q\le n$, and $p_q\in Q_n$.
The \emph{image of a set} $P\subseteq Q_n$ 
is $Pt=\{ pt \mid p\in P \}$.

The \emph{identity} transformation $\one$ maps  each element to itself, that is, $q\one=q$ for $q=1,\ldots,n$.
For $k\ge 2$,  a
transformation $t$ of a set $P=\{p_1,\dots,p_k\}$ is a \emph{$k$-cycle} if there exist pairwise different elements $p_1,\ldots,p_k$ such that
$p_1t=p_2, p_2t=p_3,\ldots, p_{k-1}t=p_k$,  $p_kt=p_1$, and all other elements of $Q_n$ are mapped to themselves.
A $k$-cycle is denoted by $(p_1,p_2,\ldots,p_k)$.
A \emph{transposition} is a $2$-cycle $(p,q)$.
A transformation that changes only one element $p$ to an element $q\neq p$ is denoted by $(p\to q)$.
A transformation mapping a subset $P$ of $Q_n$ to a single element $q$ and acting as the identity on $Q_n\setminus P$ is denoted by $(P \rightarrow q)$.

A \emph{permutation} of $Q_n$ is a mapping of $Q_n$ \emph{onto} itself. 
The set of all permutations of a set $Q_n$ of $n$ elements is a group, 
called the \emph{symmetric group} of degree $n$. 
This group has size $n!$.
It is well known that  two generators are necessary and sufficient to generate the symmetric group of degree $n$;
 in particular, the pairs $\{(1,2,\ldots, n),(1,2)\}$ and $\{(1,2,\ldots, n),(2,3,\dots,n)\}$ generate $S_n$.

The set $\cT_{Q_n}$ of all transformations of  $Q_n$
is a semigroup, in fact a monoid with $\one$ as the identity. 
It is well known that three transformations are necessary and sufficient to generate  $\cT_{Q_n}$;
in particular,  the triples
 $\{(1,2,\ldots, n),(1,2),(n \rightarrow 1) \}$ and $\{(1,2,\ldots, n),(2,3,\dots,n),(n \rightarrow 1)\}$ generate $\cT_{Q_n}$.

Let $\cD = (Q, \Sig, \delta, q_0, F)$ be a DFA. For each word $w \in \Sig^*$, the transition function induces a transformation $\delta_w$ of $Q$ by  $w$: for all $q \in Q$, 
$q\delta_w = \delta(q, w).$ 
The set $T_{\cD}$ of all such transformations by non-empty words forms a semigroup of transformations called the \emph{transition semigroup} of $\cD$~\cite{Pin97}. 
Conversely, we can use a set  $\{\delta_a \mid a \in \Sig\}$ of transformations to define $\delta$, and so the DFA $\cD$. We write $a\colon t$, where $t$ is a transformation of $Q$, to mean that the transformation $\delta_a$ induced by $a$ is~$t$. 

The \emph{Myhill congruence}~\cite{Myh57} $\lraL$ of a language $L\subseteq \Sig^*$ is defined on $\Sig^+$ as follows:
$$
\mbox{For $x, y \in \Sig^+$, } x \lraL y \mbox{ if and only if } wxz\in L  \Leftrightarrow wyz\in L\mbox { for all } w,z \in\Sig^*.
$$
This congruence is also known as the \emph{syntactic congruence} of $L$.
The quotient set $\Sig^+/ \lraL$ of equivalence classes of the relation $\lraL$ is a semigroup called the \emph{syntactic semigroup} of $L$.
If  $\cD$ is a minimal DFA of $L$, then $T_{\cD}$ is isomorphic to the syntactic semigroup $T_L$ of $L$~\cite{Pin97}, and we represent elements of $T_L$ by transformations in~$T_{\cD}$. 
The size of the syntactic semigroup has been used as a measure of complexity for regular languages~\cite{Brz13,BrYe11,HoKo04}, and is denoted by $\sig(L)$. 

The \emph{Nerode right congruence}~\cite{Ner58} of a language $L\subseteq \Sig^*$ is defined on $\Sig^*$ as follows:
$$
\mbox{For $x, y \in \Sig^*$, } x \rightarrow_L  y \mbox{ if and only if } xz\in L  \Leftrightarrow yz\in L\mbox { for all } z \in\Sig^*.
$$
The (left) quotient of $L\subseteq \Sig^*$ by a word $w\in \Sig^*$ is the language 
$w^{-1}L=\{x\in\Sig^*\mid wx\in L\}$.
Thus two words $x$ and $y$ are in the same class of the Nerode right congruence if they define the same quotient, that is, if $x^{-1}L=y^{-1}L$, and  
the number of equivalence classes of $\rightarrow_L$ is the number of quotients, which is called the 
 \emph{quotient complexity}~\cite{Brz10} of $L$.
An equivalent concept is
the \emph{state complexity of a regular language}~\cite{Yu01} $L$, which  is the number of states in a minimal DFA with alphabet $\Sig$ that recognizes $L$. 
This paper uses  the term \emph{complexity} for both of these equivalent notions.
We denote the (quotient/state) complexity of a regular language $L$ by $\kappa(L)$.

Atoms of regular languages were studied in~\cite{BrTa14}, and their complexities  
in~\cite{BrDa14,BrTa13,Iva16}.
Consider the left congruence defined as follows:
$$
\mbox{ For $x, y \in \Sig^+$, } x \leftarrow_L y  \mbox{ if and only if }  wx\in L  \Leftrightarrow wy\in L
\mbox { for all } w\in\Sig^*.
$$
Thus $x \leftarrow_L y$ if $x\in w^{-1}L \text{ if and only if } y\in w^{-1}L$. 
An equivalence class of this relation is called an \emph{atom}.
It follows that atoms are intersections of complemented and uncomplemented quotients.
In particular, if the quotients of $L$ are $K_1,\dotsc,K_n$, then for each atom $A$ there is a unique set $S \subseteq \{K_1,\dotsc,K_n\}$ such that
$A = \bigcap_{K \in S} K \cap \bigcap_{K \not\in S} (\Sig^* \setminus K)$.
In~\cite{Brz13} it was argued that 
for a regular language to be considered ``most complex'' when compared with other languages of the same (quotient/state) complexity, it should have the maximal possible number of atoms and each atom should have maximal complexity.
The complexity of atoms of ideals was studied in~\cite{BrDa14,BrDa15}, and we shall  state the results obtained there without proofs.

Most of the results in the literature concentrate on the (quotient/state) complexity of operations on regular languages.
The \emph{complexity of an operation} is the maximal  complexity of the language resulting from the operation as a function of the  complexities of the arguments.

It is generally assumed when studying complexity of binary operations that both arguments are languages over the same alphabet, since if they have different alphabets, we can just view them as languages over the union of the alphabets.
However, in 2016, Brzozowski demonstrated that this viewpoint leads to incorrect complexity results for operations on languages over different alphabets~\cite{Brz16}. He introduced a notion of \emph{unrestricted (quotient/state) complexity} of operations, which gives correct results when languages have different alphabets. The traditional notion of complexity, in which languages are assumed to have the same alphabet, is referred to as \emph{restricted (quotient/state) complexity} of operations.
As this paper was written well before~\cite{Brz16}, all our results are in terms of restricted complexity.
A study of unrestricted complexity of binary operations on ideals can be found in~\cite{BrCo16}.

There are two parts to the process of  establishing the  complexity of an operation.
First, one must find an \emph{upper bound} on the  complexity of the result of the operation
by using quotient computations or automaton constructions.
Second, one must find \emph{witnesses}  that meet this upper bound.
One usually defines a sequence $(L_n\mid n\ge k)$ of languages, where  $k$ is some small positive integer (the bound may not apply for small values of $n$).  This sequence is called a \emph{stream}. The languages  in a stream usually differ only in the parameter $n$. 
For example, 
one might study unary languages $(\{a^n\}^*\mid n\ge 1)$ that have zero $a$'s modulo~$n$. 
A unary operation then takes its argument from a stream $(L_n\mid n\ge k)$.
For a binary operation, one adds as the second argument a stream $(L'_m\mid m\ge k)$. 

Sometimes one considers the case where the inputs to the operations are restricted to some \emph{subclass} of the class of regular languages. In this setting, typically the upper bounds on complexity are different and different witnesses must be found. The complexity of operations on regular ideal languages was studied in~\cite{BJL13}.

While witness streams are normally different for different operations, the main result of this paper shows that for the subclasses of right, left and two-sided ideals, the complexity bounds for all ``basic operations'' (those mentioned in the introduction) can be met by a single stream of languages along with a stream of ``dialects'', which are slightly modified versions of the languages. Several types of dialects were introduced in~\cite{Brz13}; in this paper we consider only dialects defined as follows:

Let $\Sig=\{a_1,\dots,a_k\}$ be an alphabet; we assume that its elements are ordered as shown.
Let $\pi$ be a \emph{partial permutation} of $\Sig$, that is, a partial function $\pi \colon \Sig \rightarrow \Gamma$ where $\Gamma \subseteq \Sig$, for which there exists $\Delta \subseteq \Sig$ such that $\pi$ is bijective when restricted to $\Delta$ and  undefined on $\Sig \setminus \Delta$. We denote undefined values of $\pi$ by the symbol ``$-$''.

If $L$ is a language over $\Sig$, we denote it by $L(a_1,\dots,a_k)$ to stress its dependence on $\Sig$.
If $\pi$ is a partial permutation, let $s_\pi$ be the language substitution  defined as follows: 
 for $a\in \Sig$, 
$a \mapsto \{\pi(a)\}$ when $\pi(a)$ is defined, and $a \mapsto \emp$ when $\pi(a)$ is not defined.
For example, if $\Sig=\{a,b,c\}$, $L(a,b,c)=\{a,b,c\}^*\{ab, acc\}$, and $\pi(a)=c$, $\pi(b)=-$, and $\pi(c)=b$, then $s_\pi(L)= \{b,c\}^*\{cbb\}$.
In other words, the letter $c$ plays the role of $a$, and $b$ plays the role of $c$.
A \emph{permutational dialect} of $L(a_1,\dots,a_k)$ is a language of the form 
$s_\pi(L(a_1,\dots,a_k))$, where $\pi$ is a partial permutation of $\Sig$; this dialect is denoted by
$L(\pi(a_1),\dotsc,\pi(a_k))$.
If the order on $\Sig$ is understood, we use  $L(\Sig)$ for $L(a_1,\dots,a_k)$ and $L(\pi(\Sig))$
for $L(\pi(a_1),\dotsc,\pi(a_k))$.

Let $\Sig=\{a_1,\dots,a_k\}$,  and 
let $\cD = (Q,\Sig,\delta,q_1,F)$ be a DFA; we denote it by
$\cD(a_1,\dots,a_k)$ to stress its dependence on $\Sig$.
If $\pi$ is a partial permutation, then the \emph{permutational dialect} 
$$\cD(\pi(a_1),\dotsc,\pi(a_k))$$ of
$\cD(a_1,\dots,a_k)$ is obtained by changing the alphabet of $\cD$ from $\Sig$ to $\pi(\Sig)$, and modifying $\delta$ so that in the modified DFA 
$\pi(a_i)$ induces the transformation induced by $a_i$  in the original DFA; thus $\pi(a_i)$ plays the role of $a_i$.
One verifies that if the language $L(a_1,\dots,a_k)$ is accepted by DFA $\cD(a_1,\dots,a_k)$, then
$L(\pi(a_1),\dotsc,\pi(a_k))$ is accepted by $\cD(\pi(a_1),\dotsc,\pi(a_k))$.

In the sequel we refer to permutational dialects simply as dialects.

\begin{example}
Suppose $\cD = \cD(a,b,c) = (\{1,2,3\},\{a,b,c\},\delta,q_1,F)$, where  $\delta$ is defined by the transformations 
$a \colon (1,2,3)$, $b \colon (3 \rightarrow 1)$, and $c \colon (1,2)$. 
 Let $L = L(a,b,c)$ be the language of this DFA. 
Consider the partial permutation $\pi(a) = b$, $\pi(b) = -$, and $\pi(c) = a$. In the dialect associated with $\pi$, the letter $b$ plays the role of $a$, and $a$ plays the role of $c$. 
Thus $\cD(b,-,a)$
 is the DFA $(\{1,2,3\},\{a,b\},\delta^\pi,q_1,F)$, where $\delta^\pi$ is defined by 
$a \colon (1,2)$ and $b \colon (1,2,3)$.
The language of $\cD(b,-,a)$ is the dialect $L(b,-,a)$ of $L$.
\end{example}

The notion of a \emph{most complex stream} of regular languages was introduced informally in~\cite{Brz13}. 
A~most complex stream is one whose languages together with  their dialects  meet all the upper bounds for the complexity measures described in the introduction.
We now make this notion precise. First, however, we recall a property of boolean functions.
Let the truth values of propositions be 1 (true) and 0 (false). Let $\circ\colon \{0,1\}\times \{0,1\} \to \{0,1\}$ be a binary boolean function.
Extend $\circ$ to a function
$\circ\colon 2^{\Sig^*}\times 2^{\Sig^*}\to 2^{\Sig^*}$:
If $w\in \Sig^*$ and $L,L'\subseteq \Sig^*$, 
then $w\in (L\circ L') \Leftrightarrow (w\in L) \circ (w\in L').$
A binary boolean function is \emph{proper} if it is not a constant  and not a function of one variable only.

\begin{definition}
Let $C$ be a class of languages and let $C_n$ be the subclass of $C$ that consists of all the languages of $C$ that have (quotient/state) complexity $n$.
Let $\Sig=\{a_1,\dots,a_k\}$, and let $(L_n(\Sig) \mid n\ge k)$ be a stream of languages, where $L_n\in C_n$ for all $n\ge k$. Then $(L_n(\Sig) \mid n\ge k)$ is most complex in class $C$ if it satisfies all of the following conditions:
\be
\item
The syntactic semigroup of $L_n(\Sig)$ has maximal cardinality for each $n\ge k$.
\item
Each quotient of $L_n(\Sig)$ has maximal   complexity for each $n\ge k$.
\item
$L_n(\Sig)$ has the maximal possible number of atoms for each $n\ge k$.
\item
Each atom of $L_n(\Sig)$ has maximal complexity for each $n\ge k$.
\item
The reverse of $L_n(\Sig)$ has maximal complexity for each $n\ge k$.
\item 
The star of $L_n(\Sig)$ has maximal  complexity for each $n\ge k$.
\item
The product
$L_m(\Sig)L_n(\Sig)$ has maximal  complexity for all $m,n\ge k$.
\item
There exists a dialect $L_n(\pi(\Sig))$ such that each proper binary boolean function 
$L_m(\Sig)\circ L_n(\pi(\Sig))$ has maximal  complexity for all $m,n\ge k$.
\ee
\end{definition}

A most complex stream $(L_n \mid n \ge 3)$ for the class of regular languages was introduced in~\cite{Brz13}. 
We give the definition of $L_n$ below.

\begin{definition}
\label{def:regular}
For $n\ge 3$, let $\cD_n=\cD_n(a,b,c)=(Q_n,\Sig,\delta_n, 1, \{n\})$, where 
$\Sig=\{a,b,c\}$, 
and $\delta_n$ is defined by the transformations
$a\colon (1,\dots,n)$,
$b\colon(1,2)$,
${c\colon(n \rightarrow 1)}$. 
Let $L_n=L_n(a,b,c)$ be the language accepted by~$\cD_n$.
The structure of $\cD_n(a,b,c)$ is shown in Figure~\ref{fig:RegWit}. 
\end{definition}

\begin{figure}[ht]
\unitlength 11pt
\begin{center}\begin{picture}(31,7)(-2,0)
\gasset{Nh=2.4,Nw=2.4,Nmr=1.2,ELdist=0.3,loopdiam=1.2}

\node(1)(2,4){$1$}\imark(1)
\node(2)(6,4){$2$}
\node(3)(10,4){$3$}
\node[Nframe=n](qdots)(14,4){$\dots$}
\node(n-2)(18,4){{\small $n-2$}}
\node(n-1)(22,4){{\small $n-1$}}
\node(n)(26,4){{\small $n$}}\rmark(n)

\drawedge(1,2){$a,b$}
\drawedge(2,3){$a$}
\drawedge(3,qdots){$a$}
\drawedge(qdots,n-2){$a$}
\drawedge(n-2,n-1){$a$}
\drawedge(n-1,n){$a$}
\drawedge[curvedepth=-1.9,ELdist=-.9](2,1){$b$}
\drawedge[curvedepth=3.8](n,1){$a,c$}

\drawloop(1){$c$}
\drawloop(2){$c$}
\drawloop(3){$b,c$}
\drawloop(n-1){$b,c$}
\drawloop(n-2){$b,c$}
\drawloop(n){$b$}

\end{picture}\end{center}
\caption{Minimal DFA $\cD_n(a,b,c)$  of Definition~\ref{def:regular}.}
\label{fig:RegWit}
\end{figure}

Our main contributions in this paper are most complex streams for the classes of right, left, and two-sided regular ideals.

\section{Right Ideals}

A stream of right ideals  that is most complex was defined and studied in~\cite{BrDa14}. For completeness we include the results from that paper; the proof of Theorem~\ref{thm:RBool2} did not appear in~\cite{BrDa14}.

\begin{definition}
\label{def:RWit}
For $n\ge 3$, let $\cD_n=\cD_n(a,b,c,d)=(Q_n,\Sig,\delta_n, 1, \{n\})$, where 
$\Sig=\{a,b,c,d\}$, 
and $\delta_n$ is defined by the transformations
$a\colon (1,\dots,n-1)$,
$b\colon(2,\ldots,n-1)$,
${c\colon(n-1\rightarrow 1)}$
and ${d\colon(n-1\rightarrow n)}$. 
Note that $b$ is the identity when $n=3$.
Let $L_n=L_n(a,b,c,d)$ be the language accepted by~$\cD_n$.
The structure of $\cD_n(a,b,c,d)$ is shown in Figure~\ref{fig:RWit}. 
\end{definition}


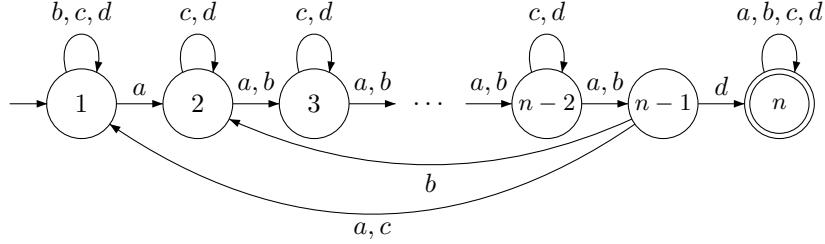
\begin{figure}[ht]
\unitlength 11pt
\begin{center}\begin{picture}(31,7)(-2,0)
\gasset{Nh=2.4,Nw=2.4,Nmr=1.2,ELdist=0.3,loopdiam=1.2}

\node(1)(2,4){$1$}\imark(1)
\node(2)(6,4){$2$}
\node(3)(10,4){$3$}
\node[Nframe=n](qdots)(14,4){$\dots$}
\node(n-2)(18,4){{\small $n-2$}}
\node(n-1)(22,4){{\small $n-1$}}
\node(n)(26,4){{\small $n$}}\rmark(n)

\drawedge(1,2){$a$}
\drawedge(2,3){$a,b$}
\drawedge(3,qdots){$a,b$}
\drawedge(qdots,n-2){$a,b$}
\drawedge(n-2,n-1){$a,b$}
\drawedge(n-1,n){$d$}
\drawedge[curvedepth=2.1](n-1,2){$b$}
\drawedge[curvedepth=3.8](n-1,1){$a,c$}

\drawloop(1){$b,c,d$}
\drawloop(2){$c,d$}
\drawloop(3){$c,d$}
\drawloop(n-2){$c,d$}
\drawloop(n){$a,b,c,d$}

\end{picture}\end{center}
\caption{Minimal DFA $\cD_n(a,b,c,d)$  of Definition~\ref{def:RWit}.}
\label{fig:RWit}
\end{figure}

The DFA of Figure~\ref{fig:RWit} has a similar structure to the DFA of Figure~\ref{fig:RegWit}.
More precisely,
DFA  $\cD_n$ of Figure~\ref{fig:RWit} is constructed by taking DFA $\cD_{n-1}$ of Figure~\ref{fig:RegWit}, adding a new state $n$ and a new input $d\colon  (n-1~\rightarrow~n)$, making $n$ the only final state, and having $b$ induce the cyclic permutation $(2,\dotsc,n-1)$, rather than  the transposition $(1,2)$. The new state and input $d$ ensure that $L_n$ is a right ideal. The new transformation by $b$ is necessary since, if $b$  induces $(1,2)$ in $\cD_n$, then $L_n$ does not meet the bound for product.

\begin{theorem}[Right Ideals~\cite{BrDa14}]
\label{thm:main}
For each $n\ge 3$, the DFA $\cD_n(a,b,c,d)$ of Definition~\ref{def:RWit} is minimal and its 
language $L_n(a,b,c,d)$ is a right ideal of complexity $n$.
The stream $(L_n(a,b,c,d) \mid n \ge 3)$  with dialect stream
$(L_n(b,a,c,d) \mid n \ge 3)$
is most complex in the class of regular right ideals.
In particular, this stream meets all the complexity bounds listed below, which are maximal for right ideals.
In several cases the bounds can be met with restricted alphabets, as shown below.
\begin{enumerate}
\item
The syntactic semigroup of $L_n(a,b,c,d)$ has cardinality $n^{n-1}$.  Moreover, fewer than four inputs do not suffice to meet this bound.
\item
The quotients of $L_n(a,-,-,d)$ have complexity $n$, except for the quotient $\{a,d\}^*$, which has complexity 1.
\item
$L_n(a,-,-,d)$ has $2^{n-1}$ atoms.
\item
For each atom $A_S$ of $L_n(a,b,c,d)$, the complexity $\kappa(A_S)$ satisfies:
\begin{equation*}
	\kappa(A_S) =
	\begin{cases}
		2^{n-1}, 			& \text{if $S=Q_n$;}\\
		1 + \sum_{x=1}^{|S|}\sum_{y=1}^{n-|S|}\binom{n-1}{x-1}\binom{n-x}{y-1},
		 			& \text{if $\emp \subsetneq S \subsetneq Q_n$.}
		\end{cases}
\end{equation*}
\item
The reverse of $L_n(a,-,-,d)$ has complexity $2^{n-1}$.
\item
The star of $L_n(a,-,-,d)$ has complexity $n+1$.
\item
The product $L_m(a,b,-,d) L_n(a,b,-,d)$ has complexity $m+2^{n-2}$.
\item
For any proper binary boolean function $\circ$, the complexity of $L_m(a,b,-,d) \circ L_n(b,a,-,d)$
is maximal. In particular,
	\be
	\item
	$L_m(a,b,-,d) \cap L_n(b,a,-,d)$ and $L_m(a,b,-,d) \oplus 	L_n(b,a,-,d)$ have complexity  $mn$.
	\item
	$L_m(a,b,-,d) \setminus L_n(b,a,-,d)$ has complexity $mn-(m-1)$.
	\item
	$L_m(a,b,-,d) \cup L_n(b,a,-,d)$ has complexity $mn-(m+n-2)$.
	\item 
	If $m\neq n$, the bounds are met by $L_m(a,b,-,d)$ and $ L_n(a,b,-,d)$.
	\ee
\end{enumerate}
\end{theorem}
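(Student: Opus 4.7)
The proof splits into verifying the basic structural claims---that $\cD_n(a,b,c,d)$ is minimal and $L_n$ is a right ideal of complexity $n$---and then verifying each of the eight complexity conditions in turn. Reachability is immediate: from state $1$ each state $k$ with $2 \le k \le n-1$ is reached by $a^{k-1}$, and state $n$ is reached by $a^{n-2}d$. Distinguishability follows since state $n$ is the unique final state and is accessible from every other state, and the cyclic action of $a$ together with $d$ separates the remaining states. Every letter fixes state $n$, the unique final state, so $L_n = L_n\Sig^*$ is a right ideal.

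For Part~1, I would show that $\{a,b,c,d\}$ generates the monoid of all transformations of $Q_n$ that fix $n$, whose cardinality is $n^{n-1}$. The transformations $a\colon(1,\dotsc,n-1)$ and $b\colon(2,\dotsc,n-1)$ generate the symmetric group on $Q_{n-1}$ by the classical generating-pair result quoted in the preliminaries; adjoining $c$ yields all of $\cT_{Q_{n-1}}$ (extended by fixing $n$); and $d$ supplies the one singular map needed to reach $n$, producing every transformation fixing $n$. For Parts~2--4, the two-input dialect $L_n(a,-,-,d)$ has a DFA whose non-final states form an $(n-1)$-cycle under $a$ and whose final state is absorbing. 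One verifies that the $n-1$ non-absorbing quotients all have complexity $n$ while the absorbing quotient equals $\Sig^*$ with complexity $1$. Since $L_n$ is a right ideal, every nonempty atom contains $\Sig^*$, so atoms correspond bijectively to subsets of the remaining $n-1$ quotients, giving $2^{n-1}$ atoms; the atom-complexity formula is obtained by constructing the minimal DFA of each atom $A_S$ and counting its reachable, pairwise distinguishable states, a computation already carried out in \cite{BrDa14,BrDa15} which I would cite.

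For Parts~5--7 I would use standard constructions adapted to right ideals. The reverse NFA, with initial state $\{n\}$ and final state $1$, yields $2^{n-1}$ reachable, pairwise distinguishable subsets under the subset construction, because the reversed action of $a$ on $Q_{n-1}$ together with $d$ generates every subset of $Q_{n-1}$. For the star, since $L_nL_n \subseteq L_n$ for any nonempty right ideal we have $L_n^* = \{\eps\} \cup L_n$, adding exactly one new state to accept $\eps$. For the product $L_mL_n$, gluing the final state of $\cD_m$ to the initial state of $\cD_n$ and determinising produces at most $(m-1) + 2^{n-2}$ reachable subsets---states $1,\dotsc,m-1$ of $\cD_m$ together with subsets of $Q_n$ that contain the initial state of $\cD_n$---and the transformations $a,b$ yield both reachability and pairwise distinguishability.

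For Part~8 I would analyse the direct product of $\cD_m(a,b,-,d)$ and $\cD_n(b,a,-,d)$. Swapping $a$ and $b$ in the dialect breaks the symmetry that would otherwise force certain product states to coincide: $a$ acts as an $(m-1)$-cycle in the first coordinate and as a near-identity in the second, while $b$ plays the opposite roles. For each proper boolean $\circ$ one identifies the appropriate final-state set, confirms that all $mn$ product states are reachable, and determines which are equivalent: intersection and symmetric difference give $mn$; difference gives $mn-(m-1)$ because one row of the product collapses; union gives $mn-(m+n-2)$ because one row and one column collapse. The main obstacle of the theorem is this case analysis of the sixteen proper boolean functions, combined with the atom computation of Part~4, whose combinatorial identity emerges from counting distinguishable subsets of states in the atom DFA and which I would invoke from the cited papers rather than rederive.
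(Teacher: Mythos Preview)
Your overall strategy matches the paper's: establish minimality and the right-ideal property, then verify each of the eight items, citing earlier work where the computation has already been done. A few points of divergence and error are worth flagging.

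For Part~3 (and hence Part~5), the paper does not count atoms directly. It invokes the theorem of Brzozowski and Tamm~\cite{BrTa13} that the number of atoms equals the complexity of the reverse, and then cites the known reverse complexity $2^{n-1}$ from~\cite{BrYe11}. Your direct argument is incomplete: the observation that the quotient $\Sig^*$ must appear uncomplemented in any nonempty atom gives only the upper bound $2^{n-1}$; you still owe a proof that all $2^{n-1}$ such intersections are nonempty, which is precisely the hard direction. (The phrase ``every nonempty atom contains $\Sig^*$'' is also miswritten; you mean that the quotient $\Sig^*$ is among the uncomplemented quotients defining the atom.)

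For Part~7 your count is off. You write $(m-1)+2^{n-2}$ reachable states described as ``subsets of $Q_n$ that contain the initial state of $\cD_n$'', but that description yields $2^{n-1}$ subsets, not $2^{n-2}$, and neither matches the bound $m+2^{n-2}$. In the paper's subset construction the reachable sets are the $m-1$ singletons $\{1'\},\dots,\{(m-1)'\}$, the $2^{n-2}$ sets $\{m',1\}\cup S$ with $S\subseteq Q_n\setminus\{1,n\}$, and the single final set $\{m',1,n\}$; the key structural fact is that once $m'$ is reached it and state~$1$ persist in every successor, and state $n$ appears only in the unique final set.

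For Part~8, the paper's proof is not a bare case analysis on the direct product: it black-boxes \cite[Theorem~1]{BBMR14}, which handles reachability and distinguishability of the entire $(m-1)\times(n-1)$ interior block in one stroke (outside a few small exceptional $(m,n)$), and then treats only the last row and column by hand. Without that lemma, the ``main obstacle'' you identify becomes substantially harder.
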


\begin{proof}
For $1\le q \le n-1$, a non-final state $q$ accepts $a^{n-1-q}d$ and no other non-final state accepts this word. All non-final states are distinguishable from the final state $n$. Hence $\cD_n(a,-,-,d)$ is minimal and $L_n(a,-,-,d)$ has $n$ quotients.
Since $\cD(a,b,c,d)$ has only one final state and that state accepts $\Sig^*$, it is a right ideal.
\be
\item
The case $n= 3$ is easily checked. For $n\ge 4$,
let $\cD'_n=(Q_n,\Sig,\delta'_n, 1,\{n\})$, where  $\Sig=\{a,b,c,d\}$, and
$a\colon(1,\ldots,n-1)$, $b\colon (1,2)$, ${c\colon(n-1 \rightarrow 1)}$ and 
${{d\colon(n-1\rightarrow n)}}$.
It was proved in~\cite{BrYe11} that  the transition semigroup of a minimal DFA accepting a right ideal has at most $n^{n-1}$ transformations, and that the transition semigroup of 
$\cD'_n$ has cardinality $n^{n-1}$. 
Since for $n\ge 3$,   $(1,2)$  is induced by  $a^{n-2}b$ in $\cD_n$,
all the transformations of $\cD'_n$ can be induced in $\cD_n$,
and the claim follows.
Moreover, it was proved in~\cite{BSY15} that an alphabet of at least four letters is required to meet this bound.

\item
Each quotient of $L_n(a,-,-,d)$, except $\{a,d\}^*$, has complexity $n$, since states $1,\ldots, n-1$ are strongly connected.
Each right ideal must have a final state that accepts $\Sig^*$ (for $L_n(a,-,-,d)$ this is state $n$), and so the complexity of the quotient corresponding to this final state is $1$.
Hence the complexities of the quotients are maximal for right ideals.
\item
It was proved  in~\cite{BrTa13} that the number of atoms of any regular language $L$ is equal to the complexity of the reverse of $L$.
If $L$ is a right ideal of complexity $n$, the maximal complexity of the reverse $L^R$ is 
$2^{n-1}$~\cite{BJL13}.
For $n = 3$, it is easily checked that our witness meets this bound.
For $n>3$, it was proved in~\cite{BrYe11} that the reverse of $L_n(a,-,-,d)$ reaches this bound.
\item
This was proved in~\cite{BrDa15}.
\item
See Item 3 above.
\item
See Theorem~\ref{thm:RStar}.
\item
See Theorem~\ref{thm:RProd}.
\item
See Theorems~\ref{thm:RBool} and \ref{thm:RBool2}.  
\ee
\end{proof}

\subsection{Star}
\begin{theorem}[Right Ideals: Star~\cite{BrDa14}]
\label{thm:RStar}
For $n \ge 3$ the complexity of the  star of $L_n(a,-,-,d)$ is $n+1$.
\end{theorem}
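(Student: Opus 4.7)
The plan is to exploit the fact that $L_n$ is a right ideal, which makes the star operation almost trivial. Since $L_n = L_n \Sig^*$, we have $L_n \cdot L_n \subseteq L_n \cdot \Sig^* = L_n$, and by induction $L_n^k \subseteq L_n$ for every $k \ge 1$. Hence $L_n^+ = L_n$, and so $L_n^* = L_n \cup \{\eps\}$. Because state $1$ of $\cD_n(a,-,-,d)$ is not final, we have $\eps \notin L_n$, so the star operation adds exactly one new word.

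From this identity I would construct a DFA for $L_n^*$ by adjoining a new initial state $0$ to $\cD_n(a,-,-,d)$, declaring $0$ to be final so that $\eps$ is accepted, and giving $0$ the same outgoing transitions as the old initial state $1$, namely $a\colon 0 \to 2$ and $d\colon 0 \to 1$. This yields an $(n+1)$-state DFA recognizing $L_n \cup \{\eps\} = L_n^*$.

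The only step that requires any work is minimality. States $1,\dots,n$ are already pairwise distinguishable in $\cD_n(a,-,-,d)$ by item~2 of Theorem~\ref{thm:main}. The new state $0$ is distinguished from every non-final state in $\{1,2,\dots,n-1\}$ by $\eps$. To distinguish $0$ from the absorbing final state $n$, note that $n$ accepts every word, while on input $a$ the new state $0$ moves to state $2$, whose language (unaffected by the construction, since no transition leads back to $0$) does not contain $\eps$; hence $0$ does not accept $a$. Since the upper bound of $n+1$ for the star of a right ideal of complexity $n$ is known to be tight in general~\cite{BJL13}, this suffices. I expect no real obstacle: the idempotence $L_n^+ = L_n$ collapses the usual subset-construction analysis entirely, so the hardest part is merely the bookkeeping of the distinguishing words for the new state.
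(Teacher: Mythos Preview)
Your proposal is correct and follows essentially the same approach as the paper: both exploit the identity $L^*=L\cup\{\eps\}$ for right ideals, adjoin a new initial--final state $0$ with the same outgoing transitions as state~$1$, and then verify minimality. The only cosmetic differences are that the paper distinguishes $0$ from $n$ using $d$ rather than $a$, and distinguishes the original states directly by words in $a^*d$ rather than by citing Theorem~\ref{thm:main}.
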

\begin{proof}
If $L$ is a right ideal, then $L^*=L\cup  \{\eps\} $.
Consider the DFA for star constructed as follows.
To add $\eps$ to $L$, 
the initial state $1$ of our witness cannot be made final since this would add other words to the language, for example, $a^{n-1}$.
Thus an additional state, say $0$, is required; this state is both initial and final in the DFA for $L^*$, its outgoing transitions are the same as those of the initial state $1$ of the DFA for $L$, and it has no incoming transitions. Since this DFA accepts $L\cup \{\eps\}$, $n+1$ states are sufficient. All the states are pairwise distinguishable, since $0$ rejects $d$, while $n$ accepts it, and all the non-final states are distinguishable by words in $a^*d$. 
\end{proof}

\subsection{Product}
\label{ssec:product_right}

We use the DFAs $\cD'_m(a,b,-,d)=(Q'_m,\Sig, \delta'_m,1',\{m'\})$ and $\cD_n(a,b,-,d)=(Q_n,\Sig, \delta_n,1,\{n\})$ shown in
Figure~\ref{fig:RProd} for $m=4$ and $n=5$, where $\Sig=\{a,b,d\}$ and the states of the first DFA are primed to distinguish them from those of the second DFA.

We show that the complexity of the product of $L_m(a,b,-,d)L_n(a,b,-,d)$ 
reaches the maximum possible bound $m+2^{n-2}$ derived in~\cite{BJL13}.
Define the $\eps$-NFA $\cP=(Q'_m\cup Q_n,\Sig, \delta_\cP,\{1'\},\{n\})$, 
where $\delta_\cP(p',a)=\{ \delta'(p',a)\}$ if $p'\in Q'_m$, $a\in\Sig$,
$\delta_\cP(q,a)=\{ \delta(q,a)\}$ if $q\in Q_n$, $a\in \Sig$, 
and $\delta_\cP(m',\eps)=\{1\}$.
This $\eps$-NFA accepts $L_mL_n$, and is illustrated in Figure~\ref{fig:RProd}.

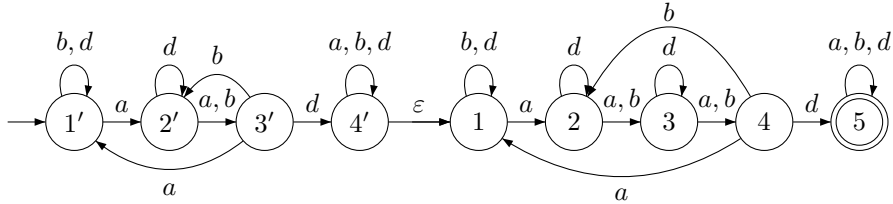
\begin{figure}[ht]
\unitlength 9pt
\begin{center}\begin{picture}(36,8)(0,1)
\gasset{Nh=2.4,Nw=2.4,Nmr=1.2,ELdist=0.4,loopdiam=1.2}
\node(1')(2,4){$1'$}\imark(1')
\node(2')(6,4){$2'$}
\node(3')(10,4){$3'$}
\node(4')(14,4){$4'$}
\drawedge(1',2'){$a$}
\drawedge(2',3'){$a,b$}
\drawedge(3',4'){$d$}

\drawedge[curvedepth=-2,ELdist=-1.2](3',2'){$b$}
\drawedge[curvedepth=2,ELdist=0.6](3',1'){$a$}

\drawloop(1'){$b,d$}
\drawloop(2'){$d$}
\drawloop(4'){$a,b,d$}

\node(1)(19,4){$1$}\imark(1)
\node(2)(23,4){$2$}
\node(3)(27,4){$3$}
\node(4)(31,4){$4$}
\node(5)(35,4){$5$}\rmark(5)

\drawedge(1,2){$a$}
\drawedge(2,3){$a,b$}
\drawedge(3,4){$a,b$}
\drawedge(4,5){$d$}

\drawedge[curvedepth=-4,ELdist=-1.0](4,2){$b$}
\drawedge[curvedepth=2.3,ELdist=0.5](4,1){$a$}

\drawedge(4',1){$\eps$}

\drawloop(1){$b,d$}
\drawloop(2){$d$}
\drawloop(3){$d$}
\drawloop(5){$a,b,d$}
\end{picture}\end{center}
\caption{$\eps$-NFA for product with $m=4$, $n=5$.}
\label{fig:RProd}
\end{figure}

\begin{theorem}[Right Ideals: Product~\cite{BrDa14}]
\label{thm:RProd}
For $m,n \ge 3$ the complexity of the product of $L_m(a,b,-,d)$ and $L_n(a,b,-,d)$ is $m+2^{n-2}$.
\end{theorem}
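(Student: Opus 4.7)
The upper bound $m+2^{n-2}$ for the complexity of a product of two right ideals was established in~\cite{BJL13}, so the plan is to show that our witness meets this bound by determinizing the $\eps$-NFA $\cP$ of Figure~\ref{fig:RProd} and exhibiting $m+2^{n-2}$ reachable, pairwise distinguishable states. Two structural observations organize the analysis: since $m'$ is a sink in $\cD'_m$ with self-loops on $a,b,d$ and the $\eps$-transition forces state $1$ into the current subset whenever $m'$ is present, every reachable subset containing $m'$ also contains $1$; and since $n$ is a sink in $\cD_n$ with self-loops on $a,b,d$, once $n$ enters a subset it stays, so every subset containing $n$ accepts $\Sig^*$ and hence collapses with every other such subset into a single Myhill--Nerode class. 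These observations split the candidate states into three families: the $m-1$ singletons $\{i'\}$ for $i\in\{1,\dots,m-1\}$; the $2^{n-2}$ subsets $\{m',1\}\cup T$ with $T\subseteq\{2,\dots,n-1\}$; and the single absorbing final class, for a total of $m+2^{n-2}$.

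For reachability, the singletons $\{i'\}$ are reached from $\{1'\}$ by $a^{i-1}$, the state $\{m',1\}$ by $a^{m-2}d$, and the absorbing final class from any $\{m',1\}\cup T$ with $n-1\in T$ by appending $d$. The delicate task is to reach every $\{m',1\}\cup T$ with $T\subseteq\{2,\dots,n-1\}$. A direct computation shows that on such a set $a$ produces $\{m',1,2\}\cup\{t+1\colon t\in T\cap\{2,\dots,n-2\}\}$, $b$ produces $\{m',1\}\cup b(T)$ where $b$ acts as the cyclic permutation $(2,3,\dots,n-1)$, and $d$ replaces $n-1$ by $n$ when $n-1\in T$. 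I would prove reachability by strong induction on $|T|$: if $2\in T$, take $T''=\{j-1\colon j\in T,\,j\ge 3\}\subseteq\{2,\dots,n-2\}$, note $|T''|=|T|-1$, and use $a(\{m',1\}\cup T'')=\{m',1\}\cup T$; if $2\notin T$, use that $b$ has order $n-2$ on $\{2,\dots,n-1\}$ together with $T\ne\emp$ to find $k$ with $2\in b^k(T)$, reach $\{m',1\}\cup b^k(T)$ by the previous case, and finish by applying $b^{n-2-k}$.

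For distinguishability, the absorbing final class is isolated by accepting $\eps$. Two non-final states $\{m',1\}\cup T$ and $\{m',1\}\cup T'$ with $T\ne T'$ are separated by picking $j\in T\triangle T'$ and the unique $k\in\{0,\dots,n-3\}$ with $b^k(j)=n-1$: then $b^kd$ routes exactly one of the two sets into a subset containing $n$, because $b^k$ permutes $\{2,\dots,n-1\}$. Two singletons $\{i'\}$ and $\{j'\}$ with $i\ne j$ are distinguished by first applying $a^{m-1-j}d$, which sends $\{j'\}$ to $\{m',1\}$ but $\{i'\}$ to a singleton $\{k'\}\ne\{m',1\}$, and then appending a word that routes $\{m',1\}$ to the final class while keeping singletons inside $\cD'_m$. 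A singleton $\{i'\}$ is separated from $\{m',1\}\cup T$ by the same word $b^kd$ when $T\ne\emp$ (a singleton goes to a singleton or to $\{m',1\}$ under $b^kd$, never reaching $n$), and by $a^{n-3}bd$ when $T=\emp$. The main obstacle will be the reachability step, because $a$ is the only letter that can enlarge a subset and it always adds state $2$ while shifting the rest; the two-case induction above (according to whether $2\in T$) is what makes the argument work.
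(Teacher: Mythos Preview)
Your proposal is correct and follows essentially the same approach as the paper: cite the upper bound from~\cite{BJL13}, determinize the $\eps$-NFA $\cP$, identify the same three families of states (singletons $\{p'\}$, sets $\{m',1\}\cup T$ with $T\subseteq\{2,\dots,n-1\}$, and a single final class), and prove reachability by induction on $|T|$ using that $a$ injects the element $2$ while $b$ cyclically shifts $\{2,\dots,n-1\}$. The only notable differences are cosmetic: the paper packages the inductive step as a single word $ab^{q_0-2}$ (where $q_0=\min T$) rather than your two-case split, and for distinguishing $\{m',1\}\cup T$ from $\{m',1\}\cup T'$ the paper uses $a^{n-1-p}d$ instead of your $b^kd$ (your choice is in fact slightly cleaner, since $b$ fixes state $1$ and avoids the growth of the subset that $a$ causes); for singletons versus $\{m',1\}\cup S$ the paper argues by counting occurrences of $d$ in accepted words rather than exhibiting explicit separating words.
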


\begin{proof}
It was shown in~\cite{BJL13} that $m+2^{n-2}$ is an upper bound on the complexity of the product of two right ideals. To prove this bound is met, 
we apply the subset construction to $\cP$ to obtain a DFA $\cD$ for $L_mL_n$.
The states of $\cD$ are subsets of $Q'_m\cup Q_n$.
We prove that all states of the form $\{p'\}$, $p=1,\ldots,m-1$ and all states of the form
$\{m',1\}\cup S$, where $S\subseteq Q_n\setminus\{1,n\}$, and state 
$\{m',1,n\}$ are reachable, for a total of $m+2^{n-2}$ states.

State $\{1'\}$ is the initial state, and  $\{p'\}$ is reached by $a^{p-1}$ for $p=2,\ldots,m-1$.
Also, $\{m',1\}$ is reached by $a^{m-2}d$, and  
states $m'$ and 1 are present in every subset reachable from $\{m',1\}$. 
By applying $ab^{q-2}$ to $\{m',1\}$ we reach $\{m',1,q\}$; hence all subsets 
$\{m',1\}\cup S$ with $|S|=1$ are reachable.

Assume now that we can reach all sets $\{m',1\}\cup S$ with $|S|=k$, and 
suppose that we want to reach $\{m',1\}\cup T$ with $T=\{q_0,q_1,\ldots,q_k\}$
with $2\le q_0<q_1<\cdots <q_k\le n-1$.
 This can be done by starting with $S=\{q_1-q_0+1, \ldots, q_k-q_0+1\}$ and applying $ab^{q_0-2}$. 
Finally, to reach $\{m',1,n\}$, apply $d$ to $\{m',1,n-1\}$.

If $1\le  p < q\le m-1$, then state $\{p'\}$ is distinguishable from $\{q'\}$ by 
$a^{m-1-q}da^{n-1}d$.
Also, state $p \in Q_n$ with $2\le p \le n-1$ accepts $a^{n-1-p}d$ and no other state $q\in Q_n$ with $2\le q\le n-1$ accepts this word. 
Hence, if $S,T \subseteq Q_n\setminus\{1,n\}$ and $S\ne T$, then 
$\{m',1\}\cup S$ and $\{m',1\}\cup T$ are distinguishable.
State $\{p'\}$ with $2\le p\le m-1$ is distinguishable from state $\{m',1\}\cup S$
because there is a word with a single $d$ that is accepted from $\{m',1\}\cup S$
but no such word is accepted by $\{q'\}$. Hence all the non-final states are distinguishable, and $\{m',1,n\}$ is the only final state.
\end{proof}

\subsection{Boolean Operations}
\label{ssec:boolean_right}

We restrict our attention to the four boolean operations $\cup,\cap, \setminus,\oplus$, since the complexity of any other proper binary boolean operation can be obtained from these four. For example,
we have $\kappa(L' \cup\ol{L}) = \kappa \ol{ (L' \cup\ol{L} ) } = \kappa ( \ol{L'} \cap L) = \kappa(L\setminus L')$.

Tight upper bounds for boolean operations on right ideals~\cite{BJL13}  are $mn$ for intersection and symmetric difference, $mn-(m-1)$ for difference, and $mn-(m+n-2)$ for union.
Since $L_n\cup L_n=L_n\cap L_n=L_n$, and 
$L_n\setminus L_n=L_n \oplus L_n=\emp$, two different languages must be used
to reach the bounds if $m=n$. 

We use the DFAs $\cD'_m(a,b,-,d)=(Q'_m,\Sig, \delta'_m,1',\{m'\})$ and $\cD_n(b,a,-,d)=(Q_n,\Sig, \delta_n,1,\{n\})$ shown in
Figure~\ref{fig:RBool} for $m=4$ and $n=5$.
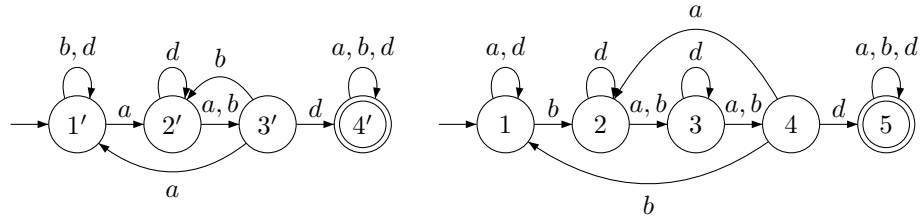
\begin{figure}[ht]
\unitlength 9pt
\begin{center}\begin{picture}(36,8)(0,1)
\gasset{Nh=2.4,Nw=2.4,Nmr=1.2,ELdist=0.3,loopdiam=1.2}
\node(1')(2,4){$1'$}\imark(1')
\node(2')(6,4){$2'$}
\node(3')(10,4){$3'$}
\node(4')(14,4){$4'$}\rmark(4')
\drawedge(1',2'){$a$}
\drawedge(2',3'){$a,b$}
\drawedge(3',4'){$d$}

\drawedge[curvedepth=-2,ELdist=-1.2](3',2'){$b$}
\drawedge[curvedepth=2,ELdist=0.6](3',1'){$a$}

\drawloop(1'){$b,d$}
\drawloop(2'){$d$}
\drawloop(4'){$a,b,d$}

\node(1)(20,4){$1$}\imark(1)
\node(2)(24,4){$2$}
\node(3)(28,4){$3$}
\node(4)(32,4){$4$}
\node(5)(36,4){$5$}\rmark(5)

\drawedge(1,2){$b$}
\drawedge(2,3){$a,b$}
\drawedge(3,4){$a,b$}
\drawedge(4,5){$d$}

\drawedge[curvedepth=-4,ELdist=-1.0](4,2){$a$}
\drawedge[curvedepth=2.5,ELdist=0.5](4,1){$b$}

\drawloop(1){$a,d$}
\drawloop(2){$d$}
\drawloop(3){$d$}
\drawloop(5){$a,b,d$}
\end{picture}\end{center}
\caption{Right-ideal witnesses for boolean operations with $m=4$, $n=5$.}
\label{fig:RBool}
\end{figure}

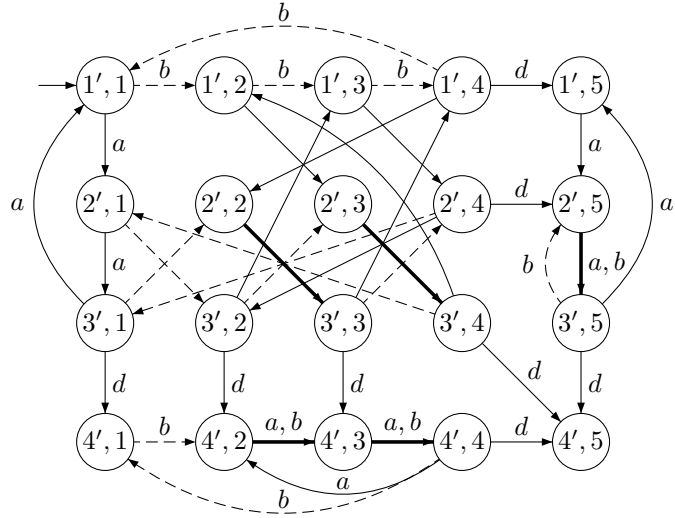
\begin{figure}[ht]
\unitlength 9pt
\begin{center}\begin{picture}(25,20)(0,-2)
\gasset{Nh=2.4,Nw=2.4,Nmr=1.2,ELdist=0.3,loopdiam=1.2}
\node(1'1)(2,15){$1',1$}\imark(1'1)
\node(2'1)(2,10){$2',1$}
\node(3'1)(2,5){$3',1$}
\node(4'1)(2,0){$4',1$}

\node(1'2)(7,15){$1',2$}
\node(2'2)(7,10){$2',2$}
\node(3'2)(7,5){$3',2$}
\node(4'2)(7,0){$4',2$}

\node(1'3)(12,15){$1',3$}
\node(2'3)(12,10){$2',3$}
\node(3'3)(12,5){$3',3$}
\node(4'3)(12,0){$4',3$}

\node(1'4)(17,15){$1',4$}
\node(2'4)(17,10){$2',4$}
\node(3'4)(17,5){$3',4$}
\node(4'4)(17,0){$4',4$}

\node(1'5)(22,15){$1',5$}
\node(2'5)(22,10){$2',5$}
\node(3'5)(22,5){$3',5$}
\node(4'5)(22,0){$4',5$}

\drawedge(1'1,2'1){$a$}
\drawedge(2'1,3'1){$a$}
\drawedge(3'1,4'1){$d$}

\drawedge[curvedepth=3,ELdist=.4](3'1,1'1){$a$}

\drawedge(1'2,2'3){}
\drawedge[linewidth=0.15](2'2,3'3){}
\drawedge(3'2,1'3){}
\drawedge(3'2,4'2){$d$}

\drawedge(1'3,2'4){}
\drawedge[linewidth=0.15](2'3,3'4){}
\drawedge(3'3,1'4){}
\drawedge(3'3,4'3){$d$}

\drawedge(1'4,2'2){}
\drawedge(2'4,3'2){}
\drawedge(3'4,4'5){$d$}

\drawedge(1'5,2'5){$a$}
\drawedge[linewidth=0.15](2'5,3'5){$a,b$}
\drawedge(3'5,4'5){$d$}

\drawedge[curvedepth=-3,ELdist=-0.9](3'5,1'5){$a$}

\drawedge[curvedepth=-2,ELdist=-0.9](3'4,1'2){}

\drawedge[dash={.5 .25}{.25}](4'1,4'2){$b$}
\drawedge[linewidth=0.15](4'2,4'3){$a,b$}
\drawedge[linewidth=0.15](4'3,4'4){$a,b$}
\drawedge(4'4,4'5){$d$}

\drawedge[curvedepth=2.2,ELdist=-.7](4'4,4'2){$a$}
\drawedge[curvedepth=3,ELdist=-1,dash={.5 .25}{.25}](4'4,4'1){$b$}
\drawedge[dash={.5 .25}{.25}](1'1,1'2){$b$}
\drawedge[dash={.5 .25}{.25}](1'2,1'3){$b$}
\drawedge[dash={.5 .25}{.25}](1'3,1'4){$b$}
\drawedge(1'4,1'5){$d$}
\drawedge[curvedepth=-2.5,ELdist=-1,dash={.5 .25}{.25}](1'4,1'1){$b$}

\drawedge[dash={.5 .25}{.25}](2'1,3'2){}
\drawedge[dash={.5 .25}{.25}](3'1,2'2){}
\drawedge[dash={.5 .25}{.25}](3'2,2'3){}
\drawedge[curvedepth=1.5,ELdist=.5,dash={.5 .25}{.25}](3'5,2'5){$b$}

\drawedge[dash={.5 .25}{.25}](2'4,3'1){}
\drawedge[dash={.5 .25}{.25}](3'3,2'4){}
\drawedge[dash={.5 .25}{.25}](3'4,2'1){}
\drawedge(2'4,2'5){$d$}

\end{picture}\end{center}
\caption{Direct product for boolean operations. Unlabeled transitions under $a$ are solid, under $b$ are dashed, and under both $a$ and $b$ are thick.   Self-loops are omitted.}
\label{fig:RCross}
\end{figure}

Let $\cD_{m,n} = \cD'_m \times \cD_n$.
Depending on the assignment of final states, this DFA recognizes different boolean operations on $L_m$ and $L_n$.
The 
direct product  of $\cD_4(a,b,-,d)$ and  $\cD_5(b,a,-,d)$ is  in
Figure~\ref{fig:RCross}.

Let $S_n$ denote the symmetric group of degree $n$. 
A \emph{basis}~\cite{Pic39} of $S_n$
is an ordered pair $(s,t)$ of distinct transformations of $Q_n=\{1,\dots,n\}$ that generate $S_n$.
A DFA \emph{has a basis $(t_a,t_b)$ for $S_n$} if it has letters $a,b\in \Sig$ such that $a$ induces $t_a$ and $b$ induces $t_b$.
In the case of our right ideal $\cD'_m$ ($\cD_n$), the transitions $\delta'_a$ and $\delta'_b$ ($\delta_a$ and $\delta_b$) restricted to $\{1',\dots,(m-1)'\}$($\{1,\dots,n-1\}$), constitute a basis for $S_{m-1}$ ($S_{n-1})$. 
Consider DFAs 
$\cA'_{m-1}=(Q'_{m-1},\Sig,\delta'_{m-1},1',\{(m-1)'\})$ and 
$\cA_{n-1}=(Q_{n-1},\Sig,\delta_{n-1}, 1, \{n-1\})$, where $\Sig=\{a,b\}$ and the transitions of $\cA'_{m-1}$ ($\cA_{n-1}$) are the same as those of $\cD'_m$ ($\cD_n)$ restricted to $Q'_{m-1}$ ($Q_{n-1})$.
By~\cite[Theorem 1]{BBMR14} all the states in the direct product of 
$\cA'_{m-1}$  and $\cA_{n-1}$ are reachable and distinguishable for 
$m-1,n-1\ge 2$,  $(m-1,n-1)\not \in \{(2,2), (3,4),(4,3),(4,4)\}$.
We shall use this result to simplify our proof of the next theorem.

\begin{theorem}[Right Ideals: Boolean Operations~\cite{BrDa14}]
\label{thm:RBool}
If $m,n\ge 3$, then
\be
\item
The complexity of $L_m(a,b,-,d) \cap L_n(b,a,-,d)$ is $mn$.
\item
The complexity of $L_m(a,b,-,d) \oplus L_n(b,a,-,d)$ is $mn$.
\item
The complexity of $L_m(a,b,-,d) \setminus L_n(b,a,-,d)$ is $mn-(m-1)$.
\item
The complexity of $L_m(a,b,-,d) \cup L_n(b,a,-,d)$ is $mn-(m+n-2)$.
\ee
\end{theorem}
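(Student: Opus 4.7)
The plan is to analyze the direct product DFA $\cD_{m,n} = \cD'_m(a,b,-,d) \times \cD_n(b,a,-,d)$ (pictured in Figure~\ref{fig:RCross}) and adjust only the set of final states for each boolean operation. The upper bounds $mn$, $mn-(m-1)$, and $mn-(m+n-2)$ are already known from~\cite{BJL13}, so for each item it suffices to show that the appropriate number of states of $\cD_{m,n}$ is reachable and pairwise distinguishable.

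For reachability, I would first invoke the cited result~\cite[Theorem~1]{BBMR14} on the alphabet $\{a,b\}$: restricted to the non-absorbing states $Q'_{m-1} \times Q_{n-1}$, the DFA $\cA'_{m-1} \times \cA_{n-1}$ has all $(m-1)(n-1)$ states reachable (handling the listed small exceptions, e.g.\ $(m-1,n-1)=(2,2)$, by hand). I would then reach the remaining $m+n-1$ states involving the absorbing states $m'$ and $n$: apply $d$ at $((m-1)',q)$ for $q\in\{1,\dots,n-2\}$ to get $(m',q)$, apply $d$ at $(p',n-1)$ for $p\le m-2$ to get $(p',n)$, apply $d$ at $((m-1)',n-1)$ to reach $(m',n)$, and finally reach the missed $(m',n-1)$ from $(m',1)$ via $b^{n-2}$ (since $b$ induces the full cycle $(1,\dots,n-1)$ in $\cD_n(b,a,-,d)$ and $m'$ is a sink in $\cD'_m$).

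For distinguishability, the main observation is that $m'$ and $n$ are absorbing final states, so from a state $(m',q)$ the first coordinate contributes a constant ``accept'' and the behaviour depends solely on the second coordinate's progress in $\cD_n$; symmetrically for $(p',n)$. I would then handle each operation:
\begin{enumerate}
\item For $\cap$, only $(m',n)$ is final. States $(p'_1,q_1) \neq (p'_2,q_2)$ are distinguishable: if $p'_1 \neq p'_2$ one uses a word taking both second coordinates to $n$ while separating the first coordinates via minimality of $\cD'_m$; the symmetric argument handles $q_1 \neq q_2$. This gives $mn$ classes.
\item For $\oplus$, the finals are $(m',q)$ with $q\neq n$ and $(p',n)$ with $p'\neq m'$. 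The same separation argument (plus noting that $(m',n)$ is the unique ``both-accept'' sink, distinguishable from every other state) yields $mn$ classes.
\item For $\setminus$, the finals are $(m',q)$ with $q\neq n$. Every state $(p',n)$ accepts the empty language because the second coordinate is stuck at the final sink $n$, forcing rejection; hence the $m$ states in column $n$ collapse into one dead class, leaving $mn-(m-1)$ classes, which I would check are distinguishable by the previous arguments.
\item For $\cup$, the $m+n-1$ states $(m',q)\cup (p',n)$ all accept $\Sig^*$ (one coordinate is already in an accepting sink) and thus form a single class, giving $mn-(m+n-2)$ classes.
\end{enumerate}

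The main obstacle will be writing the distinguishability argument cleanly: one must exhibit, for two distinct non-absorbed states $(p'_1,q_1)$ and $(p'_2,q_2)$, a word $w$ that maps one into the unique accepting class of the current operation but not the other. Here the leverage is that the letter $a$ fixes $1'$ in $\cD'_m$ while permuting $\{2,\dots,n-1\}$ in $\cD_n$, and $b$ does the reverse, so one can independently ``freeze'' one coordinate and act on the other; this is the same underlying reason the reachability argument of~\cite{BBMR14} works, and it lifts minimality of each component DFA to the product. Once this separation is in hand, the final counts for all four items follow immediately from the counting described above.
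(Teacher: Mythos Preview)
Your reachability argument is essentially the paper's, with two small omissions: you miss the state $((m-1)',n)$ (the paper reaches it from $((m-2)',n)$ by $a$), and \cite[Theorem~1]{BBMR14} has \emph{four} exceptional pairs $(m-1,n-1)\in\{(2,2),(3,4),(4,3),(4,4)\}$, not just one; the paper handles the corresponding $(m,n)\in\{(3,3),(4,5),(5,4),(5,5)\}$ by computation.

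The real gap is in distinguishability. Your ``freeze one coordinate'' claim is factually off --- in $\cD'_m(a,b,-,d)$ the letter $a$ induces $(1,\dots,m-1)$ and does \emph{not} fix $1'$; it is $b$ that fixes $1'$, and symmetrically $a$ fixes $1$ in $\cD_n(b,a,-,d)$ --- and even corrected it is not enough: the only freezable states are $1'$ and $1$, so for a generic pair $(p_1',q_1),(p_2',q_2)$ with no coordinate equal to $1$ you have no independence. Your sketch for $\cap$ (``take both second coordinates to $n$ while separating the first'') does not go through: any word sending $q_1$ to $n$ necessarily involves $d$, which may simultaneously push the other state's first coordinate to $m'$, collapsing the distinction.

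The paper avoids this by invoking \cite[Theorem~1]{BBMR14} a second time, for \emph{distinguishability}: in the restricted product $\cA'_{m-1}\times\cA_{n-1}$ all $(m-1)(n-1)$ states are pairwise distinguishable with respect to the final set $H'\circ V'$ (second-to-last row/column), for each $\circ\in\{\cap,\oplus,\setminus,\cup\}$. The key transfer is then the observation that for every $(p',q)$ in that core, $(p',q)\in H'\circ V'$ iff $(p',q)d\in H\circ V$; hence any $\{a,b\}^*$-word distinguishing two core states with respect to $H'\circ V'$ yields, after appending $d$, a word distinguishing them with respect to $H\circ V$. With the core handled, the paper then separates core states from the last row/column $H\cup V$ and finally counts equivalence classes inside $H\cup V$ exactly as you do (your observations about the absorbing sinks for $\setminus$ and $\cup$ are correct and match the paper). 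So keep your reachability and your boundary counting, but replace the ad hoc independence argument by the $d$-lift of the \cite{BBMR14} distinguishability result.
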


\begin{proof}
In the cases where $(m,n) \in \{(3,3),(4,5),(5,4),(5,5)\}$, we cannot 
apply the result from~\cite[Theorem 1]{BBMR14}, but we have verified computationally that the bounds are met. Thus we assume that $(m,n) \not\in \{(3,3),(4,5),(5,4),(5,5)\}$.

Our first task is to show that all $mn$ states of $\cD_{m,n}$ are reachable.
By~\cite[Theorem 1]{BBMR14}, all states in the set $S = \{(p',q) \mid 1 \le p \le m-1, 1 \le q \le n-1\}$ are reachable. The remaining states are the ones in the last row or last column (that is, Row $m$ or Column $n$) of the direct product.

For $1 \le q \le n-2$, from state $((m-1)',q)$ we can reach $(m',q)$ by $d$. From state $(m',n-2)$ we can reach $(m',n-1)$ by $a$. From state $((m-1)',n-1)$ we can reach $(m',n)$ by $d$. Hence all states in Row $m$ are reachable.

For $1 \le p \le m-2$, from state $(p',n-1)$ we can reach $(p',n)$ by $d$. From state $((m-2)',n)$ we can reach $((m-1)',n)$ by $a$. Hence all states in Column $n$ are reachable, and thus all $mn$ states are reachable.

We now count the number of distinguishable states for each operation. Let $H = \{(m',q) \mid 1 \le q \le n\}$ be the set of states in the last row and let $V = \{(p',n) \mid 1 \le p \le m\}$ be the set of states in the last column. If $\circ \in \{\cap,\oplus,\setminus,\cup\}$, then $L_m(a,b,-,d) \circ L_n(b,a,-,d)$ is recognized by $\cD_{m,n}$, where the set of final states is taken to be $H \circ V$.

Let $H' = \{((m-1)',q) \mid 1 \le q \le n-1\}$ and let $V' = \{(p',n-1) \mid 1 \le p \le m-1\}$.
That is, $H'$ is the second last row of states, and $V'$ is the second last column, restricted to $S$. 
By~\cite[Theorem 1]{BBMR14}, all states in $S$ are distinguishable with respect to $H' \circ V'$, for each boolean operation $\circ \in \{\cap,\oplus,\setminus,\cup\}$.
We claim that they are also distinguishable with respect to $H \circ V$ for $\circ \in \{\cap,\oplus,\setminus,\cup\}$.

To see this, one verifies the following statement: for each $\circ \in \{\cap,\oplus,\setminus,\cup\}$ and each state $(p',q) \in S$, we have $(p',q) \in H' \circ V'$ if and only if $(p',q)d \in H \circ V$. (This \emph{only} applies to states $(p',q) \in S$; for example, in Figure \ref{fig:RCross} we see that $(4',4)d \in H \cap V$ but $(4',4) \not \in  H' \circ V'$.) Since states in $S$ are distinguishable with respect to $ H' \circ V'$, it follows that for any pair of states $(p',q),(r',s) \in S$ there is a word $w$ with $(p',q)w \in H' \circ V'$ and $(r',s) \not\in  H' \circ V'$. Then by the statement, the word $wd$ sends $(p',q)$ into $H \circ V$ and $(r',s)$ outside of $H \circ V$, thus distinguishing the two states with respect to $H \circ V$.

Thus for each boolean operation $\circ$, all 
states in $S$ are distinguishable from each other with respect to the final state set $H \circ V$. Next, we prove that the states in $S$ are distinguishable from the rest of the states (those in $H \cup V$) with respect to the final state set $H \circ V$.

Since all states in $S$ are non-final, it suffices to distinguish states in $S$ from states in $X = H \cup V \setminus (H \circ V)$, the set of non-final states in $H \cup V$. If $\circ = \cup$, then $X$ contains no states and there is nothing to be done. 
If $\circ = \oplus$, the only state in $X$ is $(m',n)$, which is empty, but all states in $S$ are non-empty. If $\circ = \setminus$, then all states in $X$ are empty but no states in $S$ are.
Finally, if $\circ = \cap$, observe that each state in $X$ accepts a word containing a single $d$, while states in $S \setminus \{((m-1)',n-1)\}$  accept only words with at least two occurrences of $d$. To distinguish $((m-1)',n-1)$ from $(p',q) \in X$, apply $a$ (which maps $((m-1)',n-1)$ to $(1',2)$) and then apply a word which distinguishes $(1',2)$ from $(p',q)a$.

We have shown that each state in $S$ is distinguishable from every other state in the direct product $\cD_{m,n}$, with respect to each of the four final state sets $H \circ V$ with $\circ \in \{\cap,\oplus,\setminus,\cup\}$.
Since there are $(m-1)(n-1) = mn - m - n + 1$ states in $S$,  there are at least that many distinguishable states for each operation $\circ$.
To show that the complexity bounds are reached by $L_m(a,b,-,d) \circ L_n(b,a,-,d)$, it suffices to consider how many of the remaining $m + n - 1$ states in $H \cup V$ are distinguishable with respect to $H \circ V$.
We consider each operation in turn.

\noin\textbf{Intersection:}
Here the set of final states is $H \cap V = \{(m',n)\}$.
State $(m',n)$ is the only final state and hence is distinguishable from all the other states.
Any two states in $H$ ($V$) are distinguished by words in $b^*d$ ($a^*d$).
State $(m',1)$ accepts $b^{n-2}d$, while $(1',n)$ rejects it.
For $2\le q\le n-1$, $(m',q)$ is sent to $(m',1)$ by $b^{n-q}$, while 
state $(1',n)$ is not changed by that word.
Hence $(m',q)$ is distinguishable from $(1',n)$.
By a symmetric argument, $(p',n)$ is distinguishable from $(m',1)$ for $2 \le p \le m-1$.
For $2\le q\le n-1$ and $2\le p\le m-1$, $(m',q)$ is distinguished from
$(p',n)$ because $b^{n-q}$ sends the former to $(m',1)$ and the latter to a state of the form $(r',n)$, where $2\le r \le m-1$.
Hence all pairs of states from $H \cup V$ are distinguishable. 
Since there are $m + n - 1$ states in $H \cup V$,  it follows there are $(mn - m - n + 1) + (m + n - 1) = mn$ distinguishable states.

\noin\textbf{Symmetric Difference:}
Here the set of final states is $H \oplus V$, that is, all states in the last row and column except $(m',n)$, which is the only empty state.
This situation is complementary to that for intersection. Thus every two states from $H \cup V$ are distinguishable by the same word as for intersection.
Hence there are $mn$ distinguishable states.

\noin\textbf{Difference:}
Here the set of final states is $H \setminus V$, that is, all states in the last row $H$ except $(m',n)$, which is empty. All other states in the last column $V$ are also empty. 
The $m$ empty states in $V$ are all equivalent, and the $n-1$ final states in $H \setminus V$ are distinguished in the same way as for intersection. Hence there are $(n-1)+1 = n$ distinguishable states in $H \setminus V$. It follows there are $(mn - m - n + 1) + n = mn - (m-1)$ distinguishable states.

\noin\textbf{Union:}
Here the set of final states is $H \cup V$. From a state in $H \cup V$ it is possible to reach only other states in $H \cup V$, and all these states are final; so every state in $H \cup V$ accepts $\Sig^*$. Thus all the states in $H \cup V$ are equivalent, and so there are $(mn - m - n + 1) + 1 = mn - (m + n - 2)$ distinguishable states.
\end{proof}

Although it is impossible for the stream $(L_n(a,b,-,d) \mid n\ge 3)$ to meet the bounds for boolean operations when $m=n$, this stream is as complex as it could possibly be
in view of the following theorem:

\begin{theorem}[Right Ideals: Boolean Operations, $m\neq n$]
\label{thm:RBool2}
Suppose $m,n \ge 3$ and $m \ne n$.
\be
\item
The complexity of $L_m(a,b,-,d) \cap L_n(a,b,-,d)$ is $mn$.
\item
The complexity of $L_m(a,b,-,d) \oplus L_n(a,b,-,d)$ is $mn$.
\item
The complexity of $L_m(a,b,-,d) \setminus L_n(a,b,-,d)$ is $mn-(m-1)$.
\item
The complexity of $L_m(a,b,-,d) \cup L_n(a,b,-,d)$ is $mn-(m+n-2)$.
\ee
\end{theorem}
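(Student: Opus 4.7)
The plan is to follow the template of Theorem \ref{thm:RBool}, building the direct product DFA $\cD_{m,n} = \cD'_m(a,b,-,d) \times \cD_n(a,b,-,d)$ and partitioning its states into the inner block $S = \{(p',q) : 1 \le p \le m-1,\ 1 \le q \le n-1\}$, the last row $H = \{(m',q) : 1 \le q \le n\}$, and the last column $V = \{(p',n) : 1 \le p \le m\}$. With final-state set $H \circ V$, this DFA recognizes $L_m(a,b,-,d) \circ L_n(a,b,-,d)$, so the task reduces to counting reachable and distinguishable states for each $\circ \in \{\cap,\oplus,\setminus,\cup\}$. Because both factors now share the same dialect, \cite[Theorem~1]{BBMR14} is not directly applicable, and the asymmetry required for reachability and distinguishability must instead come from the hypothesis $m \ne n$.

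For reachability, let $G$ be the subgroup of $S_{m-1} \times S_{n-1}$ generated by the restrictions of $a$ and $b$ to $S$: here $a$ restricts to the pair of cycles $((1',\dots,(m-1)'),(1,\dots,n-1))$ and $b$ to $((2',\dots,(m-1)'),(2,\dots,n-1))$. Each pair is a $k$-cycle together with a $(k{-}1)$-cycle inside its own factor, so the projections of $G$ onto $S_{m-1}$ and $S_{n-1}$ are both surjective. By Goursat's lemma and the normal-subgroup structure of symmetric groups, $G$ must then be either the full direct product or the ``sign-twisted'' index-two subgroup $\{(g_1,g_2) : \mathrm{sgn}(g_1)\,\mathrm{sgn}(g_2) = 1\}$, and both act transitively on $S$, since for any target $(p',q)$ one can find $g_1 \in S_{m-1}$ and $g_2 \in S_{n-1}$ with $g_1(1)=p$, $g_2(1)=q$, and, when needed, with matching signs. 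The remaining $m + n - 1$ states in $H \cup V$ are then reached via $d$-transitions exactly as in the proof of Theorem \ref{thm:RBool}.

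For distinguishability I would reuse the four-part framework. First observe that for $(p',q) \in S$, $(p',q) \in H' \circ V'$ iff $(p',q)d \in H \circ V$, where $H',V'$ are the last row and column of $S$; so it suffices to separate pairs of states in $S$ by words in $a,b$ relative to $H' \circ V'$. Transitivity of $G$ gives the separation immediately for $\circ = \cap$, whose inner final set is the single point $\{((m-1)',n-1)\}$; for $\cup$, $\oplus$, and $\setminus$ one chooses $g = (g_1,g_2) \in G$ whose first-factor images of $p$ and $r$ land one inside and one outside $\{(m-1)'\}$ (and similarly in the second factor), adjusting signs to lie in $G$ if $G$ is sign-twisted. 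States of $S$ are then separated from states of $H \cup V$ by the same operation-specific arguments as in Theorem \ref{thm:RBool} (empty versus non-empty states, words containing a single $d$ versus two $d$'s), and a final count of equivalent states in $H \cup V$ yields the advertised totals $mn$, $mn$, $mn - (m-1)$, $mn - (m+n-2)$. The main obstacle is carrying out the group-theoretic reachability and separation analysis for $G$ cleanly in the sign-twisted case, together with verifying a handful of small exceptional pairs with $m$ or $n$ in $\{3,4,5\}$ directly, analogous to the four exceptional pairs handled in Theorem \ref{thm:RBool}.
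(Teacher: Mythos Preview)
Your belief that \cite[Theorem~1]{BBMR14} is ``not directly applicable'' is mistaken, and this sends you on an unnecessary detour. That theorem covers arbitrary pairs of bases of the symmetric groups, including the case where both DFAs use the same basis; the only obstruction is the list of exceptional pairs $(m-1,n-1)\in\{(2,2),(3,4),(4,3),(4,4)\}$. Since $m\neq n$, the diagonal cases $(2,2)$ and $(4,4)$ are automatically excluded, so only $(m,n)\in\{(4,5),(5,4)\}$ need separate verification. The paper simply invokes \cite[Theorem~1]{BBMR14} for reachability and for distinguishability of the states in $S$ with respect to $H'\circ V'$, exactly as in Theorem~\ref{thm:RBool}; your Goursat analysis would, if carried out, essentially reprove that cited theorem in this special case.

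More importantly, there is a genuine gap in your treatment of the states in $H\cup V$. You write that ``a final count of equivalent states in $H\cup V$ yields the advertised totals'', but for intersection this total is $mn$, which requires that \emph{every} pair of states in $H\cup V$ be distinguishable --- in particular a state $(m',q)\in H$ must be distinguished from a state $(p',n)\in V$. In Theorem~\ref{thm:RBool} this was done with words of the form $b^{n-q}$ and $b^{n-2}d$, exploiting that $b$ acts as the long cycle $(1,\dots,n-1)$ in the second factor (because of the swapped dialect) while fixing $1'$ in the first. With both factors in the same dialect, $b$ fixes both $1'$ and $1$ (and both $m'$ and $n$), so that argument collapses. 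A new argument is needed, and it must use $m\neq n$: for instance, if $m<n$ then $a^{m-1}$ fixes $(1',n)$ but moves $(m',1)$ to $(m',m)$, after which words in $b^*d$ separate the states. You should supply this (or an equivalent) argument explicitly rather than asserting that the Theorem~\ref{thm:RBool} count carries over.
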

\begin{proof}
Let $\cD'_m = \cD_m(a,b,-,d)$, $\cD_n = \cD_n(a,b,-,d)$, and $\cD_{m,n} = \cD'_m \times \cD_n$ be the direct product automaton. 
If $(m,n) \in \{(4,5),(5,4)\}$, we have verified computationally that the bounds are met. If $(m,n) \not \in \{(4,5),(5,4)\}$,  we can apply~\cite[Theorem 1]{BBMR14}. Thus by the arguments used in the proof of Theorem \ref{thm:RBool}, all states of $\cD_{m,n}$ are reachable. 

Most of the distinguishability arguments carry over as well.
If $H$ and $V$ are the last row and column of states in $\cD_{m.n}$ respectively, and $S$ is the set of states lying outside of $H \cup V$, we can use nearly identical arguments as in the proof of Theorem \ref{thm:RBool} to show that for $\circ \in \{\cap,\oplus,\setminus,\cup\}$, every state in $S$ is distinguishable from every other state in $\cD_{m,n}$ with respect to $H \circ V$.
It remains to count the number of states in $H \cup V$ that are distinguishable with respect to $H \circ V$.

\noin\textbf{Intersection:}
Here the set of final states is $H \cap V = \{(m',n)\}$. Since $(m',n)$ is the only final state, it is distinguishable from all other states.
Any two states both in $H$ (or both in $V$) are distinguished by words in $a^*d$.
Suppose $m < n$. Then $a^{m-1}$ sends $(m',1)$ to $(m',m)$ and fixes $(1',n)$.
Words in $b^*$ can send $(m',m)$ to $(m',q)$ for $2 \le q \le n-1$, and they fix $(1',n)$.
For $2 \le q \le n-1$, $(m',q)$ accepts $b^{n-1-q}d$, while $(1',n)$ remains fixed.
Hence $(m',q)$ is distinguishable from $(1',n)$ for all $q$. 
For $2 \le p \le m-1$ and $2 \le q \le n-1$, $(m',q)$ is distinguished from $(p',n)$ because $a^{m-p}$ sends $(p',n)$ to $(1',n)$ and $(m',q)$ to some state that is distinguishable from $(1',n)$.
Hence all pairs of states from $H \cup V$ are distinguishable if $m < n$. A symmetric argument works for $m > n$. Thus all $mn$ states are distinguishable.

\noin\textbf{Symmetric Difference, Difference, and Union:}
The arguments are similar to those used in the proof of Theorem \ref{thm:RBool}. 
\end{proof}

\begin{remark}
For each class of languages we studied in this paper, our goal was to find a single DFA stream that meets the upper bounds (for that class) on all of our complexity measures.
For regular right ideals, a four-letter alphabet was necessary to achieve this, because fewer than four letters are not sufficient for the size of the syntactic semigroup to be maximal. Having found such a DFA, we then observed that the alphabet of this DFA can be reduced for several operations. On the other hand, if one wishes to minimize the alphabet for one particular operation only, it is possible to find witnesses over even smaller alphabets. 

We list here each operation with the size of the smallest known alphabet (first entry) along with our alphabet size (second entry):
reversal (2/2), star (2/2), product (2/3), union (2/3), intersection (2/3), symmetric difference (2/3), and difference (2/3).

As an example, consider the two binary witnesses for the product operation that are used in~\cite{BJL13}: 
$\cD'_m=(Q'_m,\{a,b\},\delta'_m, 1', \{m'\})$, 
where 
$a,b \colon ((m-1)' \to m')((m-2)' \to (m-1)') \cdots (1' \to 2'),$
and 
$\cD_n=(Q_n,\{a,b\},\delta_n, 1, \{n\})$, 
where 
$ a\colon (1,2,\dots, n-1)$, $b \colon (n-1 \to n)(n-2 \to n-1) \cdots (2 \to 3).$
Note that, although only two inputs are used, they induce three different transformations. 
Thus one can argue that these witnesses are not simpler than ours.
\end{remark}

\section{Left Ideals}

The following stream of left ideals was  defined in~\cite{BrYe11}:

\begin{definition}
\label{def:LWit}
For $n\ge 4$, let $\cD_n=\cD_n(a,b,c,d,e)=(Q_n,\Sig,\delta_n, 1, \{n\})$, where 
$\Sig=\{a,b,c,d,e\}$,
and $\delta_n$ is defined by  transformations
$a\colon (2,\dots,n)$,
$b\colon(2,3)$,
${c\colon(n \to 2)}$,
${d\colon(n\to 1)}$, 
$e\colon (Q_n\to 2)$.
Let $L_n=L_n(a,b,c,d,e)$ be the language accepted by~$\cD_n$.
The structure of  $\cD_n(a,b,c,d,e)$ is shown in Figure~\ref{fig:LWit}. 
\end{definition}

\begin{figure}[ht]
\unitlength 10.5pt
\begin{center}\begin{picture}(33,11)(-.5,2.0)
\gasset{Nh=2.4,Nw=2.4,Nmr=1.2,ELdist=0.3,loopdiam=1.2}
\node(1)(2,8){$1$}\imark(1)
\node(2)(7,8){$2$}
\node(3)(12,8){$3$}
\node(4)(17,8){$4$}
\node[Nframe=n](qdots)(22,8){$\dots$}
{\scriptsize
\node(n-1)(27,8){{\small $n-1$}}
\node(n)(31,8){{\small $n$}}\rmark(n)
}
\drawedge(1,2){$e$}
\drawedge[ELdist=.2](2,3){$a,b$}
\drawedge(3,4){$a$}
\drawedge(4,qdots){$a$}
\drawedge(qdots,n-1){$a$}
\drawedge(n-1,n){$a$}
\drawloop(1){$a,b,c,d$}
\drawloop(2){$c,d,e$}
\drawloop(3){$c,d$}
\drawloop(4){$b,c,d$}
\drawloop(n-1){$b,c,d$}
\drawloop(n){$b$}
\drawedge[curvedepth=-2.5](3,2){$b,e$}
\drawedge[curvedepth=-4.8](4,2){$e$}
\drawedge[curvedepth=2.1](n-1,2){$e$}
\drawedge[curvedepth=3.5](n,2){$a,c,e$}
\drawedge[curvedepth=5](n,1){$d$}
\end{picture}\end{center}
\caption{Minimal DFA $\cD_n(a,b,c,d,e)$  of Definition~\ref{def:LWit}.}
\label{fig:LWit}
\end{figure}
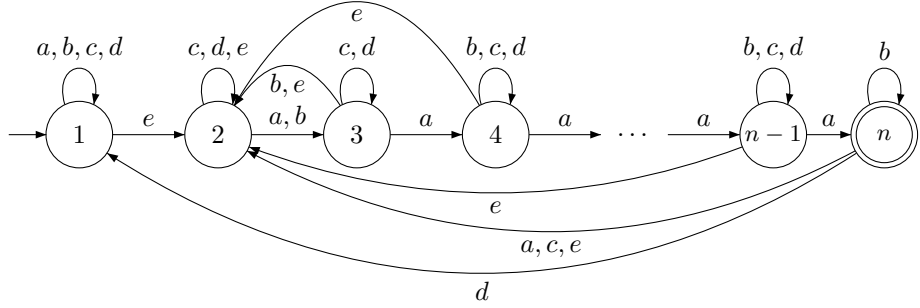

This stream of languages is closely related to the stream of Figure~\ref{fig:RegWit}.
The DFA $\cD_n(a,b,c,d,e)$ of Definition~\ref{def:LWit} is constructed by taking $\cD_{n-1}(a,b,c)$ of Figure~\ref{fig:RegWit} with states relabeled $2,\dots,n$, adding a new state $1$ and new inputs $d\colon (n \rightarrow 1)$ and $e\colon (Q_n \to 2)$.

\begin{theorem}[Left Ideals]
For each $n\ge 4$, the DFA $\cD_n(a,b,c,d,e)$ of Definition~\ref{def:LWit} is minimal and its 
language $L_n(a,b,c,d,e)$ is a left ideal of complexity $n$.
The stream $(L_n(a,b,c,d,e) \mid n \ge 4)$  with dialect stream
$(L_n(a,b,e,d,c) \mid n \ge 4)$
is most complex in the class of regular left ideals.
In particular, this stream meets all the complexity bounds listed below, which are maximal for left ideals. In several cases the bounds can be met with restricted alphabets, as shown below.
\be
\item
The syntactic semigroup of $L_n(a,b,c,d,e)$ has cardinality $n^{n-1}+n-1$.  Moreover, fewer than five inputs do not suffice to meet this bound.
\item
All quotients of $L_n(a,-,-,d,e)$ have complexity $n$.
\item
$L_n(a,-,c,d,e)$ has $2^{n-1}+1$ atoms. 
\item
For each atom $A_S$ of $L_n(a,b,c,d,e)$, the complexity $\kappa(A_S)$ satisfies:
\begin{equation*}
	\kappa(A_S) =
	\begin{cases}
		 n, 			& \text{if $S=Q_n$;}\\
		2^{n-1},		& \text{if $S=\emp$;}\\
		1 + \sum_{x=1}^{|S|}\sum_{y=1}^{n-|S|}\binom{n-1}{x}\binom{n-x-1}{y-1},
		 			& \text{otherwise.}
		\end{cases}
\end{equation*}
\item
The reverse of $L_n(a,-,c,d,e)$ has complexity $2^{n-1}+1$.

\item
The star of $L_n(a,-,-,-,e)$ has complexity $n+1$.
\item
The product $L_m(a,-,-,-,e) L_n(a,-,-,-,e)$ has complexity $m+n-1$.
\item
For any proper binary boolean function $\circ$, the complexity of  $L_m(a,-,c,-,e)\\ \circ L_n(a,-,e,-,c)$
is $mn$.

\ee
\end{theorem}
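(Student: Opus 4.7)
The plan is to verify that $\cD_n$ is a minimal DFA whose language is a left ideal of complexity $n$, reduce items 1--5 to already-published results, and prove items 6--8 directly. Minimality: state $n$ is the unique final state; for $2 \le q \le n-1$, the word $a^{n-q}$ is accepted only from state $q$ among non-final states; and state $1$ is distinguished because every letter in $\{a,b,c,d\}$ self-loops at $1$, so (for instance) $e a^{n-2}$ is accepted from $1$ but no $e$-free word is accepted from $1$. The left-ideal property holds because $e$ maps every state to $2$, giving $L_q \supseteq e L_2 \supseteq L_1$ for every $q$, which is equivalent to $\Sigma^* L_n = L_n$. Items 1, 3, and 5 (syntactic-semigroup cardinality, number of atoms, and reverse complexity) follow from \cite{BrYe11,BSY15} combined with the reverse--atom correspondence of \cite{BrTa13}; item 4 is \cite{BrDa15}; item 2 is immediate since states $2,\ldots,n$ are strongly connected under $a$ and state $1$ reaches state $2$ via $e$.

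For item 6 I use that every left ideal $L$ satisfies $LL \subseteq \Sigma^* L = L$, so $L_n^* = L_n \cup \{\varepsilon\}$. A DFA for $L_n^*$ is obtained by adjoining a new final initial state $0$ whose outgoing transitions mirror those of state $1$ but without its self-loops (so $0 \xrightarrow{e} 2$ and $0 \xrightarrow{x} 1$ for $x \in \{a,b,c,d\}$). All $n+1$ states are distinguishable: state $0$ from state $1$ by $\varepsilon$, state $0$ from state $n$ by any letter fixed by $n$ (such as $b$) that sends $0$ to the non-final state $1$, and remaining pairs by minimality of $\cD_n$. For item 7 I would apply the standard $\varepsilon$-NFA construction for product (with $\varepsilon$-transition from $m'$ to $1$) and determinise. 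The upper bound $m+n-1$ is from \cite{BJL13}; exploiting the left-ideal structure I would exhibit $m+n-1$ pairwise distinguishable reachable subsets --- the ``pre-acceptance'' singletons $\{p'\}$ for $1 \le p \le m-1$, together with a further family of subsets arising after $m'$ has first been visited --- using prefixes in $\{a,e\}^*$ for reachability and suffixes involving $e$ and powers of $a$ for distinguishability.

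The main obstacle is item 8. Since $c$ and $e$ are not permutations of $\{2,\ldots,n\}$, the basis-of-the-symmetric-group argument combined with \cite[Theorem 1]{BBMR14} that worked for right ideals is not available. In the direct product of $\cD_m(a,-,c,-,e)$ and $\cD_n(a,-,e,-,c)$ the crucial observation is the asymmetry imposed by the dialect: in the first factor $e$ escapes state $1'$ (sending $1'$ to $2'$), while in the second factor $e$ plays the role of $c$ and induces only $(n\to 2)$, fixing state $1$; symmetrically $c$ escapes state $1$ of the second factor while fixing state $1'$ and doing $(m\to 2)$ in the first. This lets me move one coordinate off its boundary without disturbing the other; inside the interior $\{2,\ldots,m\}\times\{2,\ldots,n\}$, alternating powers of $a$ (which cycle the two coordinates on the different periods $m-1$ and $n-1$) with $c$ or $e$ (acting as one-sided resets) allows me to position the two coordinates independently, and boundary states are then reached by a final application of $e$ or $c$. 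For distinguishability, the key point is that a left ideal has no absorbing final state, so no row or column of the direct product collapses to a single equivalence class; pairs of states are pulled apart by words in $\{a,c,e\}^*$ using the fact that $c$ and $e$ behave as resets on opposite factors. Carrying out reachability of all $mn$ states without a clean symmetric-group framework, and distinguishability uniformly across the four operations $\cap,\oplus,\setminus,\cup$, is the most delicate part of the argument.
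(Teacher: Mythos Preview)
Your handling of minimality, the left-ideal property, and items 1--5 matches the paper (citations to \cite{BrYe11,BSY15,BrTa13,BrDa15} and strong connectedness). For item~6 your construction is the same as the paper's; note only that in the restricted alphabet $\{a,e\}$ the letter $b$ is unavailable, so distinguish $0$ from $n$ by $a^{n-1}$ instead (this sends $0\to 1$ and $n\to n$).

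For item~7 your $\varepsilon$-NFA/determinisation route would work, but the paper takes a shorter path that exploits the left-ideal structure directly. Since every quotient of a left ideal contains the ideal itself, \cite{BJL13} shows one may build a DFA for $L_mL_n$ by \emph{deleting} the final state $m'$ of $\cD'_m$ and redirecting every transition into $m'$ to the initial state $1$ of $\cD_n$. This yields an $(m{+}n{-}1)$-state DFA outright, with no subset construction; distinguishability is then a one-line observation about shortest accepted words (each state has a distinct shortest word in $\{a,e\}^*$). Your ``further family of subsets after $m'$ is first visited'' is never needed.

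For item~8 your diagnosis is right---\cite[Theorem~1]{BBMR14} is unavailable---and your description of the $c/e$ asymmetry is accurate, but the paper's execution is more structured than ``alternating powers of $a$ with resets''. Reachability of the interior block $\{2',\dots,m'\}\times\{2,\dots,n\}$ is obtained via the Chinese Remainder Theorem: $a$ acts as a single permutation of order $\ell=\lcm(m{-}1,n{-}1)$ on this block, so from any seed $(p',q)$ one reaches every $(r',s)$ with $r-s\equiv p-q\pmod{\gcd(m{-}1,n{-}1)}$; seeds $(p',2)$ for $2\le p\le m{-}1$ and $(2',3)$ cover all residue classes. Distinguishability is organised into five claims that funnel every pair down to distinguishing $(2',2)$ from some other state in Row~$2$ or Column~$2$, using the ``rotate-and-reset'' words $ac$ and $ae$ (which cyclically permute $\{(p',2):2\le p\le m{-}1\}$ and $\{(2',q):2\le q\le n{-}1\}$ respectively) together with $a^{\ell-1}$ or $a^{\ell-2}$ to land one state at $(m',n)$ or $(m',n{-}1)$. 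Your high-level plan is sound, but without these two devices the case analysis is hard to close uniformly across all four operations.
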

\begin{proof}
DFA $\cD_n(a,-,-,d,e)$ is minimal  since 
only state 1 does not accept any word in $a^*$, whereas every other state $p >1$ accepts $a^{n-p }$ and no state $q$ with $1<q\neq p$  accepts this word.
It was proved in~\cite{BrYe11} that $\cD_n(a,b,c,d,e)$  accepts a left ideal.
\be
\item
It was shown in~\cite{BrSz14} that the syntactic semigroup of a left ideal of complexity $n$ has cardinality at most $n^{n-1}+n-1$, and
in~\cite{BrYe11} that the syntactic semigroup of $L_n(a,b,c,d,e)$ has cardinality $n^{n-1}+n-1$. 
Moreover, it was proved in~\cite{BSY15} that an alphabet of at least five letters is required to meet this bound. 
\item
Each quotient of $L_n(a,-,-,d,e)$ has complexity $n$, since its minimal DFA is strongly connected.
\item
The number of atoms of any regular language $L$ is equal to the complexity of the reverse of $L$~\cite{BrTa13}. It was proved in~\cite{BrYe11} that the complexity of the reverse of $L_n(a,-,c,d,e)$ is $2^{n-1}+1$.
\item
This was proved in~\cite{BrDa15}.
\item
See Item~3 above.
\item
The argument is the same as for the star of right ideals.
\item
See Theorem~\ref{thm:LProd}.
\item
See Theorem~\ref{thm:LBool}.
\ee
\end{proof}

\subsection{Product}
\label{ssec:product_left}
We now show that the complexity of the product of $\cD'_m(a,-,-,-,e)$ with $\cD_n(a,-,-,-,e)$ 
reaches the maximum possible bound $m+n-1$ derived in~\cite{BJL13}. 
As in~\cite{BJL13} we use the following construction:
Define a DFA $\cD$ from DFAs $\cD'_m$ and $\cD_n$
by omitting  the final state of $\cD'_m$
and all the transitions from the final state,
and redirecting all the transitions that  go
from a non-final state of $\cD'_m$
to the final state of $\cD'_m$
to the initial state of $\cD_n$.
It was proved in~\cite{BJL13} that $\cD$ accepts $L_mL_n$.
The construction is illustrated in Figure~\ref{fig:LProd} for $m=4$ and $n=5$.

\begin{figure}[ht]
\unitlength 9pt
\begin{center}\begin{picture}(36,8)(0,1)
\gasset{Nh=2.4,Nw=2.4,Nmr=1.2,ELdist=0.4,loopdiam=1.0}
\node(1')(2,4){$1'$}\imark(1')
\node(2')(6,4){$2'$}
\node(3')(10,4){$3'$}
\drawedge(1',2'){$e$}
\drawedge(2',3'){$a$}

\drawedge[curvedepth=1.2,ELdist=.4](3',2'){$e$}

\drawloop(1'){$a$}
\drawloop(2'){$e$}

\node(1)(16,4){$1$}
\node(2)(20,4){$2$}
\node(3)(24,4){$3$}
\node(4)(28,4){$4$}
\node(5)(32,4){$5$}\rmark(5)

\drawedge(1,2){$e$}
\drawedge(2,3){$a$}
\drawedge(3,4){$a$}
\drawedge(4,5){$a$}

\drawedge[curvedepth= 1.2,ELdist=-.7](3,2){$e$}
\drawedge[curvedepth=-2.5,ELdist=-1.0](4,2){$e$}
\drawedge[curvedepth=3,ELdist=-1](5,2){$a,e$}

\drawedge(3',1){$a$}

\drawloop(1){$a$}
\drawloop(2){$e$}
\
\end{picture}\end{center}
\caption{Product of left-ideal witnesses with $m=4$, $n=5$.}
\label{fig:LProd}
\end{figure}
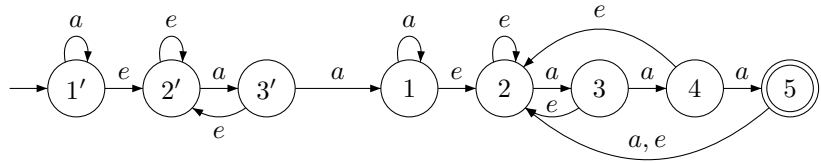

\begin{theorem} [Left Ideals: Product]
\label{thm:LProd}
For $m,n\ge 4$, the complexity of  the product of $L_m(a,-,-,-,e)$ and  $L_n(a,-,-,-,e)$ is $m+n-1$.
\end{theorem}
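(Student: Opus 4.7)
My plan is to show that the DFA $\cD$ constructed above has $m + n - 1$ reachable and pairwise distinguishable states, which combined with the upper bound $m + n - 1$ from~\cite{BJL13} yields the result.

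Reachability is immediate from the structure of $\cD$: starting at $1'$, the letter $e$ reaches $2'$, then successive $a$'s reach $3', \ldots, (m-1)'$, one more $a$ follows the redirected transition to $1$, then $e$ reaches $2$, and successive $a$'s reach $3, \ldots, n$.

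For distinguishability, my approach is to compute the length $\ell(p)$ of the shortest accepting word from each state $p$ and verify that all $m + n - 1$ values are distinct. This suffices because if $\ell(p) < \ell(q)$, then $p$ accepts some word of length $\ell(p)$ that $q$ cannot accept, so such a word distinguishes $p$ from $q$. A direct computation gives $\ell(n) = 0$; $\ell(s) = n - s$ for $2 \le s \le n-1$; $\ell(1) = n - 1$ (shortest accepted word $ea^{n-2}$); $\ell(q') = m + n - q - 1$ for $2 \le q \le m-1$ (shortest accepted word $a^{m-q} e a^{n-2}$); and $\ell(1') = m + n - 2$ (shortest accepted word $e a^{m-2} e a^{n-2}$). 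These values are exactly $0, 1, \ldots, m + n - 2$ in some order, hence pairwise distinct.

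The main delicate step is verifying that no shorter accepting paths exist from states $1$, $1'$, and each $q'$. The key structural facts are: (i) $a$ is a self-loop at both $1$ and $1'$, so any accepting path from these states must begin with $e$; (ii) the redirected transition $(m-1)' \xrightarrow{a} 1$ is the unique edge from $\cD'$-states into $\cD_n$-states, so any accepting path from a $\cD'$-state must traverse from $q'$ to $(m-1)'$ via $a$'s (possibly after an initial $e$), take the redirected edge to $1$, and then follow an accepting path in $\cD_n$. With these observations the lengths above are the shortest possible, the $m + n - 1$ states are all distinguishable, and we conclude $\kappa(L_m L_n) = m + n - 1$.
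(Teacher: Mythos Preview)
Your proof is correct and follows essentially the same approach as the paper: both arguments distinguish the $m+n-1$ states of $\cD$ by computing the shortest accepting word from each state and observing that the lengths $0,1,\ldots,m+n-2$ are pairwise distinct. Your version is simply more explicit, spelling out the length computation and the structural facts (self-loops at $1,1'$ under $a$; uniqueness of the redirected edge) that the paper leaves as ``it is also clear.''
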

\begin{proof}
By construction $\cD$ has $m+n-1$ states.
It is also clear that the shortest word accepted by state $1'$ is $ea^{m-2}ea^{n-2}$, whereas for a state $p'$ with $2\le p\le m-1$ it is $a^{m-p}ea^{n-2}$, for state $1$ it is $ea^{n-2}$, and for any state $q$ with $2\le q\le n$ it is $a^{n-q}$.
Hence all the states are distinguishable by their shortest words.
\end{proof}

\subsection{Boolean Operations}
\label{ssec:boolean_left}
As pointed out earlier, two different languages have to be used to reach the maximal complexity for boolean operations. 
Let $\cD'_m = \cD'_m(a,-,c,-,e)$,  $\cD_n = \cD_n(a,-,e,-,c)$,
and $\cD_{m,n} = \cD'_m \times \cD_n $. Figure~\ref{fig:LBool} shows DFAs $\cD'_4(a,-,c,-,e)$ and  $\cD_5(a,-,e,-,c)$.

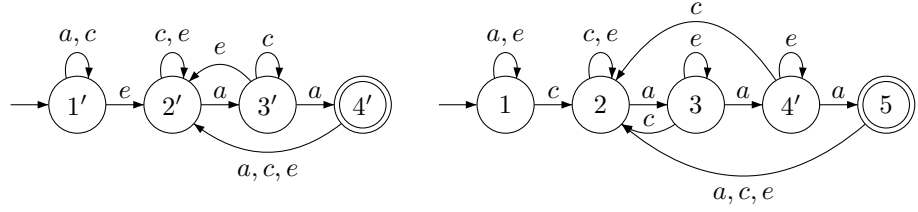
\begin{figure}[ht]
\unitlength 9pt
\begin{center}\begin{picture}(36,8)(0,1)
\gasset{Nh=2.4,Nw=2.4,Nmr=1.2,ELdist=0.3,loopdiam=1.0}
\node(1')(2,4){$1'$}\imark(1')
\node(2')(6,4){$2'$}
\node(3')(10,4){$3'$}
\node(4')(14,4){$4'$}\rmark(4')
\drawedge(1',2'){$e$}
\drawedge(2',3'){$a$}
\drawedge(3',4'){$a$}

\drawedge[curvedepth=-1.7,ELdist=-.9](3',2'){$e$}
\drawedge[curvedepth=2,ELdist=0.5](4',2'){$a,c,e$}

\drawloop(1'){$a,c$}
\drawloop(2'){$c,e$}
\drawloop(3'){$c$}

\node(1)(20,4){$1$}\imark(1)
\node(2)(24,4){$2$}
\node(3)(28,4){$3$}
\node(4)(32,4){$4'$}
\node(5)(36,4){$5$}\rmark(5)

\drawedge(1,2){$c$}
\drawedge(2,3){$a$}
\drawedge(3,4){$a$}
\drawedge(4,5){$a$}

\drawedge[curvedepth=1.2,ELdist=-.8](3,2){$c$}
\drawedge[curvedepth=-3.5,ELdist=-.8](4,2){$c$}
\drawedge[curvedepth=3.0,ELdist=0.5](5,2){$a,c,e$}

\drawloop(1){$a,e$}
\drawloop(2){$c,e$}
\drawloop(3){$e$}
\drawloop(4){$e$}
\end{picture}\end{center}
\caption{Left-ideal witnesses for boolean operations with $m=4$, $n=5$.}
\label{fig:LBool}
\end{figure}

\begin{theorem}[Left Ideals: Boolean Operations]
\label{thm:LBool}
If $m,n\ge 4$ and $\circ$ is any proper binary boolean function, then
the complexity of $L_m(a,-,c,-,e) \circ L_n(a,-,e,-,c)$ is $mn$.
\end{theorem}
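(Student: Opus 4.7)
The plan is to show reachability of all $mn$ states in the direct product $\cD_{m,n} = \cD'_m \times \cD_n$ and distinguishability of all $mn$ states with respect to $F = F'_m \circ F_n$ for each of the four proper boolean operations $\cap, \cup, \setminus, \oplus$, where $\cD'_m = \cD_m(a,-,c,-,e)$ has final set $F'_m = \{m'\}$ and $\cD_n = \cD_n(a,-,e,-,c)$ has final set $F_n = \{n\}$. Let $H = \{(m',q) : 1 \le q \le n\}$ and $V = \{(p',n) : 1 \le p \le m\}$. The crucial structural feature provided by the dialect swap is that in $\cD_{m,n}$ the letter $c$ resets the second coordinate to $2$ while affecting the first only by sending $m' \mapsto 2'$; dually, $e$ resets the first coordinate to $2'$ while affecting the second only by sending $n \mapsto 2$. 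Thus $c$ and $e$ act as nearly independent resets on the two coordinates, and this asymmetry will drive both parts of the argument.

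For reachability, I first handle the border states: $(p', 1)$ with $1 \le p \le m$ is reached by $e \cdot a^{p-2}$, using that $a$ fixes $1$ in $\cD_n$; symmetrically, $(1', q)$ is reached by $c \cdot a^{q-2}$. For an interior state $(p', q)$ with $p, q \ge 2$, I reach $(1', q^*)$ for a suitable $q^* \in \{2, \ldots, n-1\}$, apply $e$ to arrive at $(2', q^*)$, and then apply $a^k$ to slide along the $a$-orbit to $(p', q)$. By the Chinese remainder theorem, a pair $(q^*, k)$ with $k \equiv p-2 \pmod{m-1}$ and $k \equiv q - q^* \pmod{n-1}$ exists whenever $\gcd(m-1, n-1) \le n - 2$. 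When $(n-1) \mid (m-1)$ a single residue class is missed, and I supplement by starting from some $(p^*, 2)$, reached via $(p^*, 1) \xrightarrow{c} (p^*, 2)$ for $p^* < m$, and again applying $a^k$. A symmetric analysis handles $(m-1) \mid (n-1)$, and a short combinatorial check shows the two methods jointly cover all interior states even in the extremal case $m = n$, since the two residues missed (namely $q \equiv p-1$ and $q \equiv p+1 \pmod{m-1}$) are distinct for $m \ge 4$.

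For distinguishability, I take any two distinct states $(p_1, q_1) \ne (p_2, q_2)$ and exhibit a word $w$ such that exactly one of $(p_1, q_1)\cdot w$, $(p_2, q_2)\cdot w$ lies in $F$. The approach is case analysis on which coordinates agree. When $p_1 = p_2$, I find a distinguisher $u$ for $q_1, q_2$ in $\cD_n$ (by minimality) and adjust by pre- or post-compositions that steer the first coordinate into the class required by $\circ$; the asymmetric roles of $c$ and $e$ let me move one coordinate without destroying the distinguishing feature of the other. The cases $q_1 = q_2$ and $p_1 \ne p_2,\, q_1 \ne q_2$ are analogous, with the latter being easier since either coordinate can drive the distinction.

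The hard part will be the distinguishability step for the union operation, where $F = H \cup V$ has $m + n - 1$ states, so I must find a word sending exactly one of the two states outside $H \cup V$. Because $m'$ has no self-loop under any letter of $\{a, c, e\}$, one cannot simply ``hold'' a state at $m'$ while steering the other; the distinguishing word must exploit the cycle lengths of $a$ on both coordinates simultaneously to land one state on the non-final side while keeping the other on the final side. I expect this to require a tighter CRT-based construction analogous to the reachability argument above, with a similar case split according to whether $\gcd(m-1, n-1)$ attains its maximal possible value.
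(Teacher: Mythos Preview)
Your reachability argument is correct and closely parallels the paper's: both use the Chinese Remainder Theorem on the $a$-orbits in $S = \{(p',q) : 2 \le p \le m,\ 2 \le q \le n\}$, starting from enough seed states to hit every residue class modulo $g = \gcd(m-1,n-1)$. The paper seeds with $(p',2)$ for $2 \le p \le m-1$ together with $(2',3)$; your seeds $(2',q^*)$ and $(p^*,2)$ amount to the same idea transposed.

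The distinguishability plan, however, has a genuine gap. The step ``find a distinguisher $u$ for $q_1, q_2$ in $\cD_n$ and then adjust the first coordinate by pre- or post-composition'' does not close as stated. When $p_1 = p_2$, every word keeps the two states in the same row, so after $u$ you are at $(r',n)$ and $(r',s)$ with $s \ne n$; for union these are distinguished only if $r \ne m$, for intersection and difference only if $r = m$, and you have no direct control over $r$. Post-composition cannot separate the first coordinates (they are equal and stay equal), while pre-composition --- say by $e$, to reset the first coordinate to $2'$ --- can collapse the second coordinates (e.g.\ when $\{q_1,q_2\} = \{2,n\}$). So the ``adjust'' step is where the real work lies, and it has not been carried out. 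Your remark that the case $p_1\neq p_2$, $q_1\neq q_2$ is ``easier'' is also off: in the paper this is the most delicate case, requiring a multi-branch reduction.

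The paper sidesteps all of this by a different organizing idea: it first proves that \emph{any} pair of distinct states can be mapped by some word so that one lands on the fixed anchor $(2',2)$ and the other on some other state of Row~2 or Column~2 (this reduction is uniform over all four operations). Only then does it treat the operations, distinguishing $(2',2)$ from each row- and column-mate by the explicit words $a^{\ell-1}$ (for $\cap,\oplus$) and $ac\cdot a^{\ell-2}$ (for $\cup,\setminus$), where $\ell = \lcm(m-1,n-1)$. Your intuition that the cycle lengths of $a$ drive the distinguishing step is correct, but the reduction to a single anchor state is the missing structural idea, and once you have it no CRT case split on $g$ is needed in the distinguishability half.
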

\begin{proof}
Our first task is to show that all $mn$ states of $\cD_{m,n}$ are reachable.
State $(1',1)$ is the initial state. For $q=2,\dots,n$,  $(1',q)$ is reachable by $ca^{q-2}$, and for $p=2,\dots,m$, $(p',1)$ is reachable by $ea^{p-2}$.
Next,  $(p',2)$ is reached from $(p',1)$ by $c$ for $p=2,\dots, m-1$, and 
$(2',q)$ is reached from $(1',q)$ by $e$ for $q=2,\dots,n-1$.

Let $g=\gcd(m-1,n-1)$ and $\ell=\lcm(m-1,n-1)$; then $g\cdot \ell=(m-1)(n-1)$.
Note that $a$ is a permutation on the set $S=\{(p',q) \mid 2\le p\le m, 2\le q\le n\}$. Since $a$ is a cyclic permutation of order $m-1$ on $Q'_{m}\setminus\{1'\}$, and $a$ is also a cyclic permutation of order $n-1$ on $Q_{n}\setminus\{1\}$,  $a$ is a permutation of order $\ell =\lcm(m-1,n-1)$ on $S$.

Let $(p'_1,q_1)$ and $(p'_2,q_2)$ be elements of $S$, such that $p_1-q_1\equiv p_2-q_2 \mod g$. Then $p_2-p_1\equiv q_2-q_1\mod g$. 
Since $m-1$ and  $\frac{n-1}{g}$
are coprime,  by the Chinese Remainder Theorem the equivalences $k\equiv p_2-p_1\mod(m-1)$ and $ k\equiv q_2-q_1\mod \frac{n-1}{g}$ have a unique solution modulo $(m-1)\cdot\frac{n-1}{g}=\ell$ for $k$. 
Since $k\equiv p_2-p_1 \mod(m-1)$,  we have  $k\equiv p_2-p_1\equiv q_2-q_1\mod g$. Combined with $ k\equiv q_2-q_1\mod\frac{n-1}{g}$, this gives $k\equiv q_2-q_1\mod(n-1)$. Applying  $a^k$ to $(p'_1,q_1)$, gives the state $(p'_1+k\mod(m-1),q_1+k\mod(n-1))=(p'_2,q_2)$. 
So for every pair $(p'_1,q_1)$ and $(p'_2,q_2)$ such that $p_1-q_1\equiv p_2-q_2\mod g$,  $(p'_2,q_2)$  is reachable from  $(p'_1,q_1)$ by  some number of $a$'s.

If   $(p'_1,q_1)\in S$ is reachable,  all $(p',q)\in S$ such that $p_1-q_1\equiv p - q \mod g$  are also reachable. 
Since  $(p'_1,2)$ is reachable for $p_1=2,\dots,m-1$,  all $(p', q )\in S$ such that $p -q \in\{0,1,2,\dots,(m-3)\mod g\}$ are  reachable. 
Since $(2',3)$ is reachable, all $(p',q)\in S$ such that $p-q\equiv-1\equiv(m-2)\mod g$ are reachable. Since all remainders modulo $g$  have  at least one representative, all of $S$ is reachable.

We prove distinguishability using a number of claims:
\be
\item
{\bf $(2',2)$ is distinguishable from every other state in Column 2.}\hfill
	\be
	\item If the operation is intersection (symmetric difference), then $(m',n)$ is distinguishable from all other states in Column $n$, since $(m',n)$ is the only final (non-final) state in this column. Note that $a^{m-1}$ ($a^{n-1}$) acts as the identity on the set $Q'_m\setminus \{1'\}$  ($Q_n\setminus \{1\}$).
	Hence $a^{\lcm(m-1,n-1)}$ is the identity on 
	$Q'_m\setminus \{1\}\times Q_n\setminus \{1\}$, and
	$x=a^{\lcm(m-1,n-1)-1}$ maps $(2',2)$ to $(m',n)$.
	If two states are in the same column, then so are their successors after $a^\ell$ is applied, for any $\ell\ge 0$. Therefore applying $x$ to $(p',2)$ leads to a state in Column $n$; so $(2',2)$ and $(p',2)$ are distinguishable by $x$.
	
	\item
	If the operation is union (difference), then $(m',n-1)$ is distinguishable from any other state in Column $n-1$. 
	Consider $(2',2)$ and $(p',2)$; applying $ac$ results in $(3',2)$ and $(r', 2)$, where $r \neq 3$. Following this by $a^{\lcm(m-1,n-1)-2}$ yields $(m',n-1)$ and $(s', n-1)$, where $s \neq m$. Hence $aca^{\lcm(m-1,n-1)-2}$ distinguishes $(2',2)$ from $(p',2)$.
	\ee
\item
{\bf $(2',2)$ is distinguishable from every other state in Row 2.}\hfill\\
The argument is symmetric to that for Claim 1,  when the operation is intersection, symmetric difference, or union. If the operation is difference, the states can be distinguished by $a^{\lcm(m-1,n-1)-1}$ since this maps $(2',2)$ to $(m',n)$, which is distinguishable from all other states in Row $m$.
\item
{\bf For any two states in the same column there is a word mapping exactly one of them to $(2',2)$.}\hfill\\
Let the two states be $(p',q)$ and $(r',q)$. 
If $\{p, r\}=\{2, m\}$, let $s=1$; otherwise, let $s=0$. Applying $a^s$ 
results in $(p'_1,q_1)$ and $(r'_1,q_1)$, where $\{p_1,r_1\}\neq \{2,m\}$, since $m \ge 4$.
Thus we can assume that the pair of states to be distinguished is $(p',q)$ and $(r',q)$, where $\{p,r\}\neq \{2,m\}$. 
Now $c$ takes these states to $(p'_1 ,2)$ and $(r'_1,2)$, and $p_1 \neq r_1$, since $c$ can map two states of $Q'_m\setminus \{1'\}$ to the same state only if these states are $2'$ and $m'$. 
Observe that $ac$ cyclically permutes states $\{(p',2) \mid 2\le p\le m-1\}$.
So  applying $ac$  a sufficient number of times maps exactly one of the two states to $(2',2)$.
\item
{\bf For any two states in the same row there is a word mapping exactly one of them to $(2',2)$.}\hfill\\
The proof is symmetric to that for Claim 3, 
if we interchange rows and columns and replace $c$ by $e$.

\item
{\bf  For any pair of states, there exists a word that maps one of the states to $(2',2)$ and the other to a state $(p',2)$, $p\neq 2$, or $(2',q)$, $q \neq 2$.}
There are several cases:
\be
\item
If the states are in the same column or row, then the result holds by Claims 3 and 4.
Hence assume they  are not in the same column or row.
\item
If both states  
lie outside of Row $m$,
then $c$ takes both states to Column $2$ but it takes each state to a different row. The result now follows by Claim~3.

\item
If  both states  
lie outside of Column $n$,
then  $e$  takes both states to Row $2$ but it takes each state to a different column. The result now follows by Claim~4.

\item
If one of the states is $(m',n)$, then $a$ takes it to $(2',2)$. If Case (b) does not apply, the other state must have been taken to Row $m$.
If Case (c) also does not apply, it must also have been taken to Column $n$.
Thus the other state must have been taken to $(m',n)$. 
Applying $a$ again
results in $(2',2)$ and $(3',3)$, 
and this reduces to Case~(b).
\item
If the states are $(m',n-1)$ and $( (m-1)',n)$, then applying $a^2$ results in 
$(3',2)$ and $(2',3)$, 
and Case (b) applies.
\item
In the remaining case, one state is in Row $m$ and the other state is in Column $n$; furthermore, neither state is the state $(m',n)$ from Case (d), and the pair of states being considered is not the pair $(m',n-1)$ and $((m-1)',n)$ from Case (e).

If the state in Row $m$ is not $(m',n-1)$, applying $a$ sends it to a state that is not in Column $n$, and the other state to Column $2$; so Case (c) applies. 
Otherwise, the state in Row $m$ is $(m',n-1)$ and the state in Column $n$ is not $((m-1)',n)$. Applying $a$ sends the first state to Row 2 and the other state to a state not in Row $m$; so Case (b) applies.
\ee
\ee
We have shown that for any pair of states, there exists a word that takes one of the states to $(2',2)$ and the other state not to $(2', 2)$ but to either Row $2$ or Column $2$ by Claim 5.
By Claims 1 and 2, those states are distinguishable. Therefore the original states are also distinguishable.
\end{proof}

\begin{remark}
For regular left ideals, the minimal alphabet required to meet all the bounds has five letters. 
As was the case with right ideals, it is possible to reduce the alphabet for some operations~\cite{BJL13}. The sizes are as follows:
reversal (3/4), star (2/2), product (1/2), union $(4/3)$, intersection (2/3), symmetric difference (2/3), and difference (3/3). Note that the previously known witness for union used a four-letter alphabet, while ours only uses three letters. 
\end{remark}

\section{Two-Sided Ideals}
\label{sec:2sided}

The following stream of two-sided ideals was  defined in~\cite{BrYe11}:

\begin{definition}
\label{def:2sided}
For $n\ge 5$, let  
$\cD_n =\cD_n(a,b,c,d,e,f)= (Q_n,\Sig,\delta, 1,\{n\})$, where
$\Sig=\{a,b,c,d,e,f\}$, 
and $\delta_n$ is defined by the transformations
$a \colon (2,3,\ldots,n-1)$,
$b \colon (2,3)$,
$c \colon (n-1\to 2)$,
$d \colon (n-1\to 1)$,
$e \colon (Q_{n-1}\to 2)$,
and $f \colon (2\to n)$.
The structure of  $\cD_n(a,b,c,d,e,f)$ is shown in Figure~\ref{fig:2sided}. 
\end{definition}

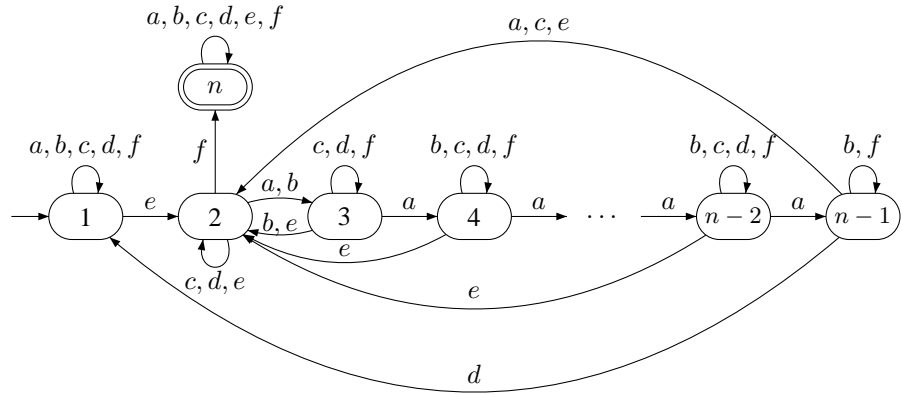
\begin{figure}[ht]
\unitlength 7pt
\begin{center}\begin{picture}(43,19)(0,-1)
\gasset{Nh=2.5,Nw=4,Nmr=1.25,ELdist=0.4,loopdiam=1.5}
\node(n)(8,14){$n$}\rmark(n)
\drawloop(n){$a,b,c,d,e,f$}

\node(1)(1,7){1}\imark(1)
\node(2)(8,7){2}
\node(3)(15,7){3}
\node(4)(22,7){4}
\node[Nframe=n](4dots)(29,7){$\dots$}
	{\small
\node(n-2)(36,7){$n-2$}
	}
	{\small
\node(n-1)(43,7){$n-1$}
	}
\drawedge(2,n){$f$}
\drawedge(1,2){$e$}
\drawloop(1){$a,b,c,d,f$}
\drawloop[loopangle=270,ELdist=.2](2){$c,d,e$}
\drawedge[curvedepth= 1,ELdist=.1](2,3){$a,b$}
\drawedge[curvedepth= 1,ELdist=-1.2](3,2){$b,e$}
\drawloop(3){$c,d,f$}
\drawedge(3,4){$a$}
\drawedge[curvedepth= 2.5,ELdist=-1](4,2){$e$}
\drawedge(4,4dots){$a$}
\drawedge(4dots,n-2){$a$}
\drawloop(4){$b,c,d,f$}
\drawloop(n-2){$b,c,d,f$}
\drawedge(n-2,n-1){$a$}
\drawedge[curvedepth= 5,ELdist=-1.2](n-2,2){$e$}
\drawedge[curvedepth= -9.5,ELdist=-1.2](n-1,2){$a,c,e$}
\drawedge[curvedepth= 9.5,ELdist=-1.5](n-1,1){$d$}
\drawloop(n-1){$b,f$}
\end{picture}\end{center}
\caption{Minimal DFA $\cD_n(a,b,c,d,e,f)$  of Definition~\ref{def:2sided}.}
\label{fig:2sided}
\end{figure}

This stream of languages is closely related to the stream of Figure~\ref{fig:RegWit}.
The DFA $\cD_n(a,b,c,d,e,f)$ of Definition~\ref{def:2sided} is constructed by taking $\cD_{n-2}(a,b,c)$ of Figure~\ref{fig:RegWit} with states relabeled $2,\dots,n-1$, adding new states $1$ and $n$, and new inputs $d\colon (n-1 \rightarrow 1)$,  $e\colon (Q_{n-1} \to 2)$, and $f \colon (2\to n)$.

\begin{theorem}[Two-Sided Ideals]
For each $n\ge 5$,  
the DFA $\cD_n(a,b,c,d,e,f)$ of Definition~\ref{def:2sided} is minimal and its 
language $L_n(a,b,c,d,e,f)$ is a two-sided ideal of complexity $n$.
The stream $(L_n(a,b,c,d,e,f) \mid n \ge 5)$  with dialect stream
$(L_n(b,a,c,d,e,f) \mid n \ge 5)$
is most complex in the class of regular two-sided ideals.
In particular, this stream meets all the complexity bounds listed below, which are maximal for two-sided ideals. In several cases the bounds can be met with restricted alphabets, as shown below.
\be
\item
The syntactic semigroup of $L_n(a,b,c,d,e,f)$ has cardinality $n^{n-2}+(n-2)2^{n-2}+1$.  Moreover, fewer than six inputs do not suffice to meet this bound.
\item
All quotients of $L_n(a,-,-,d,e,f)$ have complexity $n$, except the quotient corresponding to state $n$, which has complexity $1$.
\item
$L_n(a,-,-,d,e,f)$ has $2^{n-2}+1$ atoms.

\item
For each atom $A_S$ of $L_n(a,b,c,d,e,f)$, the complexity $\kappa(A_S)$ satisfies:
\begin{equation*}
	\kappa(A_S) =
	\begin{cases}
		 n, 			& \text{if $S=Q_n$;}\\
		 2^{n-2}+n-1,		& \text{if $S=Q_n\setminus \{1\} $;}\\
		  1 + \sum_{x=1}^{|S|}\sum_{y=1}^{n-|S|}\binom{n-2}{x-1}\binom{n-x-1}{y-1},
		 			& \text{otherwise.}
		\end{cases}
\end{equation*}
\item
The reverse of $L_n(a,-,-,d,e,f)$ has complexity $2^{n-2}+1$.

\item
The star of $L_n(a,-,-,-,e,f)$ has complexity $n+1$.
\item
The product $L_m(a,-,-,-,e,f) L_n(a,-,-,-,e,f)$ has complexity $m+n-1$.
\item
For any proper binary boolean function $\circ$, the complexity of $L_m(a,b,-,d,e,f) \\ \circ L_n(b,a,-,d,e,f)$
is maximal. In particular,
	\be
	\item
	 $L_m(a,b,-,d,e,f) \cap L_n(b,a,-,d,e,f)$ has complexity $mn$, as does\\ 
	 $L_m(a,b,-,d,e,f) \oplus L_n(b,a,-,d,e,f)$.
	 \item
	$L_m(a,b,-,d,e,f) \setminus L_n(b,a,-,d,e,f)$ has complexity 	$mn
	- (m-1)$.
	\item
	$L_m(a,b,-,d,e,f) \cup L_n(b,a,-,d,e,f)$ has complexity 	$mn
	- (m+n-2)$.
	\item 
	If $m\neq n$, the bounds are met by $L_m(a,b,-,d,e,f)$ and $ L_n(a,b,-,d,e,f)$.
	\ee
\ee
\end{theorem}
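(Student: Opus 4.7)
My proof plan follows the template established for the right-ideal and left-ideal theorems above. First I would verify that $\cD_n(a,b,c,d,e,f)$ is minimal: state $n$ is the unique absorbing final state, state $1$ lies outside the strongly connected component $\{2,\dots,n-1\}$ and is distinguished from every $p\in\{2,\dots,n-1\}$ by a word of the form $a^if$ for suitable $i$ (since $a$ and $f$ both fix $1$, but $f$ sends state $2$ to the final state $n$), and two distinct states $p,q\in\{2,\dots,n-1\}$ are likewise distinguished by rotating one of them into state $2$ via $a^i$ and then applying $f$. The fact that $L_n$ is a two-sided ideal is established in~\cite{BrYe11}.

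Items~(1)--(5) can be handled almost entirely by citing existing results. The upper bound $n^{n-2}+(n-2)2^{n-2}+1$ on the syntactic semigroup and the lower bound of six on the required alphabet come from~\cite{BrSz14,BSY15}, and~\cite{BrYe11} shows that our DFA attains this bound. Item~(2) holds because $\{1,\dots,n-1\}$ is strongly connected under $\{a,d,e\}$, while state $n$ is absorbing and hence has quotient of complexity~$1$. Items~(3) and~(5) follow from the equality between the number of atoms and the reverse complexity proved in~\cite{BrTa13} combined with the reverse-complexity computation carried out in~\cite{BrYe11}, and item~(4) is proved in~\cite{BrDa15}.

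The substantive new content lies in items~(6)--(8), which I would isolate as separate theorems modeled directly on those in the preceding sections. For the \textbf{star}, any two-sided ideal $L$ satisfies $L^2 \subseteq \Sig^*L\Sig^* = L$, hence $L^* = L \cup \{\varepsilon\}$; I would add one new initial/final state whose outgoing transitions duplicate those of state $1$, obtain a DFA on $n+1$ states, and verify pairwise distinguishability by noting that the new state differs from state $1$ on $\varepsilon$ while the original states remain distinguishable in $\cD_n$. For the \textbf{product} the upper bound $m+n-1$ is from~\cite{BJL13}, and I would reuse the left-ideal construction of Section~\ref{ssec:product_left}: delete the final state of $\cD'_m$, redirect every transition entering it to the initial state of $\cD_n$, and then distinguish all $m+n-1$ surviving states by the lengths of their shortest accepted words, which are strictly different in each state.

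For the \textbf{boolean operations} the argument parallels Theorems~\ref{thm:RBool} and~\ref{thm:RBool2}: I form the direct product $\cD'_m \times \cD_n$ using the dialect swap $a\leftrightarrow b$, invoke~\cite[Theorem~1]{BBMR14} to deduce reachability and pairwise distinguishability of the core states $\{(p',q)\mid 2\le p\le m-1,\ 2\le q\le n-1\}$, and then handle the extra rows and columns (states with first coordinate $1'$ or $m'$, or second coordinate $1$ or $n$) by reaching them via $e$, $d$, and $f$, and distinguishing them with words of the form $a^*f$ or $a^*e$ that isolate the unique state whose status changes under the given boolean operation. The $m=n$ case requires the dialect $L_n(b,a,-,d,e,f)$ as in Theorem~\ref{thm:RBool}, while the $m\neq n$ case can already use $L_n(a,b,-,d,e,f)$. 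The principal obstacle will be the distinguishability bookkeeping for the boolean operations: the additional letter $f$ and the absorbing state $n$ introduce new row/column structure absent from the right-ideal proof, and one must track carefully that each of the four final-state regimes still attains the claimed counts $mn$, $mn$, $mn-(m-1)$, and $mn-(m+n-2)$.
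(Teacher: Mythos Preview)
Your overall plan matches the paper's: minimality and the ideal property are cited from~\cite{BrYe11}, items~(1)--(5) are dispatched by the same citations, star is handled via $L^*=L\cup\{\varepsilon\}$, product reuses the left-ideal splice construction, and boolean operations are treated in a separate theorem on the direct product. That much is fine.

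There is, however, a concrete error in your product argument. You propose to distinguish the $m+n-1$ states of the spliced DFA by the \emph{lengths} of their shortest accepted words, as was done for left ideals. Over the alphabet $\{a,e,f\}$ this fails: with $e\colon(Q_{n-1}\to 2)$ and $f\colon(2\to n)$, every state $q\in\{3,\dots,n-1\}$ of $\cD_n$ reaches the final state in exactly two steps via $ef$, so all of these states share the same shortest-word length. Similarly, in the $\cD'_m$ part the states $1',3',4',\dots,(m-1)'$ all reach acceptance in four steps via $efef$. The paper's Theorem~\ref{thm:2Prod} therefore does not use shortest-word lengths; instead it distinguishes states of $\cD'_m$ from states of $\cD_n$ by counting occurrences of $f$ (every word accepted from a primed state contains at least two $f$'s, whereas some word with a single $f$ is accepted from each unprimed state), and relies on the minimality of $\cD'_m$ and $\cD_n$ separately for the remaining pairs.

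A smaller remark on boolean operations: your plan imports the right-ideal template wholesale, invoking~\cite[Theorem~1]{BBMR14} for both reachability \emph{and} distinguishability of the core block. The paper uses~\cite{BBMR14} only for reachability of $\{(p',q)\mid 2\le p\le m-1,\ 2\le q\le n-1\}$; distinguishability is handled by an explicit case analysis built around the word $a^{m-r}f^{m-r}$, which sends any state in Row~$r$ to Row~$m$ and every other state elsewhere. The right-ideal trick of transporting distinguishability from $H'\circ V'$ to $H\circ V$ via a single letter does not carry over directly here, because $f$ maps Column~$2$ (not Column~$n-1$) into Column~$n$. Your acknowledgement that ``the distinguishability bookkeeping'' is the principal obstacle is accurate, but be aware that the template you propose to follow does not apply as written.
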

\begin{proof}
Notice that inputs $a$, $e$ and $f$ are needed to make all the states reachable.
It was proved in~\cite{BrYe11} that $\cD_n(a,b,c,d,e,f)$  is minimal and accepts a two-sided ideal.
\be
\item
It was shown in~\cite{BrSz14} that the syntactic semigroup of a two-sided ideal of complexity $n$ has cardinality at most $n^{n-2}+(n-2)2^{n-2}+1$, and
in~\cite{BrYe11} that the syntactic semigroup of $L_n(a,b,c,d,e,f)$ has that cardinality. 
Moreover, it was proved in~\cite{BSY15} that an alphabet of at least six letters is required to meet this bound. 
\item
This follows from Definition~\ref{def:2sided}.
\item
The number of atoms of any regular language $L$ is equal to the complexity of the reverse of $L$~\cite{BrTa13}. It was proved in~\cite{BrYe11} that the complexity of the reverse of $L_n(a,-,-,d,e,f)$ is $2^{n-2}+1$.
\item
This was proved in~\cite{BrDa15}.

\item
See Item~3 above.
\item
The argument is the same as for the star of right ideals.
\item
See Theorem~\ref{thm:2Prod}.
\item
See Theorems~\ref{thm:2Bool} and~\ref{thm:2Bool2}.
\ee
\end{proof}

\subsection{Product}
\label{ssec:product_2sided}
We show that the complexity of the product of the DFA $\cD'_m(a,-,-,-,e,f)$ with $\cD_n(a,-,-,-,e,f)$ 
meets the upper bound $m+n-1$ derived in~\cite{BJL13}. We use the same construction as for left ideals for the product DFA $\cD$. 
The construction is illustrated in Figure \ref{fig:TSProd} for $m=n=5$.

\begin{theorem} [Two-Sided Ideals: Product]
\label{thm:2Prod}
For $m,n\ge 5$, the product of the language $L_m(a,-,-,-,e,f)$ with $L_n(a,-,-,-,e,f)$ has complexity $m+n-1$.
\end{theorem}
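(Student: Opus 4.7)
The upper bound $m+n-1$ is established in~\cite{BJL13}, and, as for left ideals (Section~\ref{ssec:product_left}), one constructs from $\cD'_m=\cD_m(a,-,-,-,e,f)$ and $\cD_n=\cD_n(a,-,-,-,e,f)$ a DFA $\cD$ with $m+n-1$ states accepting $L_m L_n$: delete the final state $m'$ of $\cD'_m$ and redirect every transition into $m'$ to the initial state $1$ of $\cD_n$. Under the restricted alphabet $\{a,e,f\}$ the only transition into $m'$ is $f\colon 2'\to m'$, so in $\cD$ this becomes $f\colon 2'\to 1$; moreover, once a computation enters $\cD_n$ it cannot return to $\cD'_m$. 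It therefore suffices to prove that every state of $\cD$ is reachable and that all pairs of states are pairwise distinguishable.

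Reachability is routine: from $1'$ the words $ea^k$ for $0\le k\le m-3$ reach $2',\dots,(m-1)'$; then $ef$ reaches state $1$ of $\cD_n$, and from $1$ the words $ea^k$ for $0\le k\le n-3$ reach $2,\dots,n-1$, while $ef$ reaches $n$. For distinguishability I plan three cases. \textbf{(i)} Two states of $\cD_n$ are distinguished in $\cD$ by the same words that distinguish them in the minimal DFA $\cD_n$, since transitions starting from $\cD_n$-states never leave $\cD_n$. \textbf{(ii)} A state $q\in\{2,\dots,n-1\}$ is distinguished from any $p'\in\cD'_m$ by $a^k f$, where $k=(2-q)\bmod(n-2)$: this word takes $q$ through $2$ to $n$, whereas from $p'$ the prefix $a^k$ stays inside $\cD'_m$ and the trailing $f$ either fixes the current state (if it is not $2'$) or jumps to $1$ (if it is $2'$), never to $n$. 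State $1$ of $\cD_n$ is distinguished from each $p'$ by $ef$ (which reaches $n$ from $1$ but only $1$ from any $p'$), and state $n$ is the unique final state. \textbf{(iii)} For any two distinct states $p',r'$ of $\cD'_m$ we may assume $p\ge 2$; then $a^{m-p}fef$ sends $p'$ along $p'\to 2'\to 1\to 2\to n$, but sends $r'$ first to some state distinct from $2'$ (since $a$ fixes $1'$ and acts as a cyclic permutation of period $m-2$ on $\{2',\dots,(m-1)'\}$, and $|r-p|<m-2$ whenever $r\ne p$), so the middle $f$ does not cross into $\cD_n$ and the final $ef$ lands at $1$, not at $n$.

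The main delicate point is verifying in case~(iii) that $a^{m-p}$ never sends $r'$ to $2'$ for $r\ne p$, which reduces to the observation that the cyclic action on $\{2',\dots,(m-1)'\}$ has order exactly $m-2$ and that $1'$ is fixed by $a$. Combining the three cases shows that $\cD$ is minimal, so the complexity of $L_m(a,-,-,-,e,f) L_n(a,-,-,-,e,f)$ equals the upper bound $m+n-1$.
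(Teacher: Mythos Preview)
Your proof is correct and follows the same overall strategy as the paper: build the $(m+n-1)$-state DFA $\cD$ by the left-ideal construction, then verify reachability and pairwise distinguishability via the same three-way case split (two $\cD_n$-states, a $\cD'_m$-state versus a $\cD_n$-state, two $\cD'_m$-states). The only difference is stylistic: where you exhibit explicit distinguishing words ($a^kf$, $ef$, $a^{m-p}fef$), the paper argues more conceptually---for case~(ii) it observes that every word accepted from a $\cD'_m$-state contains at least two $f$'s while from any $\cD_n$-state some accepted word has at most one $f$, and for case~(iii) it invokes the minimality of $\cD'_m$ to produce a word sending one state to $1$ and the other into $Q'_m\setminus\{m'\}$.
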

\begin{proof}
By construction $\cD$ has $m+n-1$ states.
We know that all the states in $\cD_m'$ are pairwise distinguishable. Hence in $\cD$, for each pair there exists a word that takes one state to state 1 and the other to a state in $Q'_m\setminus\{m'\}$. Also, all pairs of distinct states in $\cD_n$ are distinguishable. 
To distinguish a state $p'$ from a state $q$ in $\cD$, note that every word accepted from $p'$ contains two $f$'s, whereas there are words accepted from $q$ that contain only one $f$.

\end{proof}

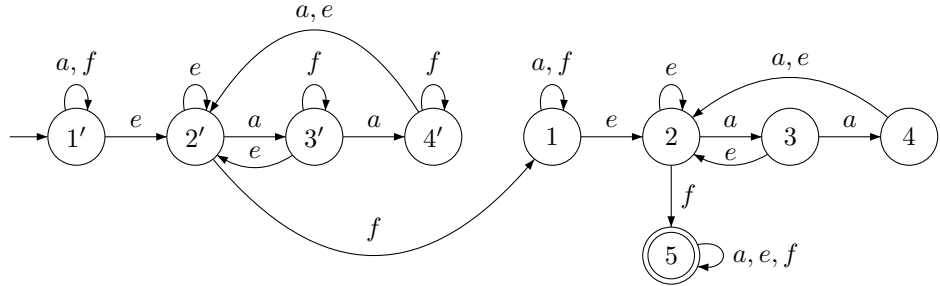
\begin{figure}[ht]
\unitlength 9pt
\begin{center}\begin{picture}(36,10)(1,-1)
\gasset{Nh=2.4,Nw=2.4,Nmr=1.2,ELdist=0.4,loopdiam=1.0}
\node(1')(2,4){$1'$}\imark(1')
\node(2')(7,4){$2'$}
\node(3')(12,4){$3'$}
\node(4')(17,4){$4'$}

\drawedge(1',2'){$e$}
\drawedge(2',3'){$a$}
\drawedge(3',4'){$a$}

\drawedge[curvedepth=1.3,ELdist=-.9](3',2'){$e$}
\drawedge[curvedepth=-4.5,ELdist=-1.0](4',2'){$a,e$}

\drawloop(1'){$a,f$}
\drawloop(2'){$e$}
\drawloop(3'){$f$} 
\drawloop(4'){$f$} 

\node(1)(22,4){$1$}
\node(2)(27,4){$2$}
\node(3)(32,4){$3$}
\node(4)(37,4){$4$}
\node(5)(27,-1){$5$}\rmark(5)

\drawedge(1,2){$e$}
\drawedge(2,3){$a$}
\drawedge(3,4){$a$}
\drawedge(2,5){$f$}

\drawedge[curvedepth= -5,ELdist=.6](2',1){$f$}

\drawedge[curvedepth= 1.2,ELdist=-.7](3,2){$e$}
\drawedge[curvedepth=-2.5,ELdist=-1.0](4,2){$a,e$}

\drawloop(1){$a,f$}
\drawloop(2){$e$}
\drawloop[loopangle=360](5){$a,e,f$}
\end{picture}\end{center}
\caption{Product of two-sided-ideal witnesses with $m=5$, $n=5$.}
\label{fig:TSProd}
\end{figure}

\subsection{Boolean Operations}
\label{ssec:boolean_2sided}
\begin{theorem}[Two-Sided Ideals: Boolean Operations]
\label{thm:2Bool}
If $m,n\ge 5$, then
\be
\item
The complexity of $L_m(a,b,-,d,e,f) \cap L_n(b,a,-,d,e,f)$ is $mn$.
\item
The complexity of $L_m(a,b,-,d,e,f) \oplus L_n(b,a,-,d,e,f)$ is $mn$.
\item
The complexity of $L_m(a,b,-,d,e,f) \setminus L_n(b,a,-,d,e,f)$ is $mn-(m-1)$.
\item
The complexity of $L_m(a,b,-,d,e,f) \cup L_n(b,a,-,d,e,f)$ is $mn-(m+n-2)$.
\ee
\end{theorem}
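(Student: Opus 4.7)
The plan is to adapt the proof of Theorem~\ref{thm:RBool} to the two-sided setting. Let $\cD'_m = \cD_m(a,b,-,d,e,f)$, $\cD_n = \cD_n(b,a,-,d,e,f)$, and $\cD_{m,n} = \cD'_m \times \cD_n$; this direct product has $mn$ states and initial state $(1',1)$. Let $H = \{(m',q) \mid 1 \le q \le n\}$ be the last row of $\cD_{m,n}$ and $V = \{(p',n) \mid 1 \le p \le m\}$ the last column. For each $\circ \in \{\cap,\oplus,\setminus,\cup\}$ the DFA $(\cD_{m,n}, H\circ V)$ recognizes $L_m \circ L_n$. Since the upper bounds stated in the theorem are from~\cite{BJL13}, it suffices to exhibit the claimed number of reachable and pairwise distinguishable states. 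The small cases $(m,n)\in\{(5,5),(5,6),(6,5),(6,6)\}$, which lie outside the range of~\cite[Theorem 1]{BBMR14} used below, would be verified computationally.

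For reachability, call $S = \{(p',q) \mid 2 \le p \le m-1,\ 2 \le q \le n-1\}$ the \emph{core}. On $S$, the inputs $a$ and $b$ act exactly as in the right-ideal proof of Theorem~\ref{thm:RBool}: $a$ is a cycle on $\{2',\ldots,(m-1)'\}$ together with the transposition $(2,3)$ on $\{2,\ldots,n-1\}$, and $b$ mirrors this. Hence~\cite[Theorem 1]{BBMR14}, applied with $(m-1,n-1)$ replaced by $(m-2,n-2)$, shows that every state of $S$ is reachable from $(2',2)$ by a word in $\{a,b\}^*$; and $(2',2)$ is reached from the initial state $(1',1)$ by $e$. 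The boundary states (those with $p\in\{1,m\}$ or $q\in\{1,n\}$) are then reached in one or two further steps using $d$ (which sends $(m-1)'\to 1'$ and $(n-1)\to 1$) and $f$ (which sends $2'\to m'$ and $2\to n$): for example, $(p',1) = (p',n-1)\cdot d$ for $p\ne m-1$; $((m-1)',1) = ((m-2)',n-1)\cdot da$; $(m',q) = (2',q)\cdot f$ for $q\ne 2$; and $(m',1) = (2',n-1)\cdot df$. The remaining boundary states, including the four corners, arise similarly.

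For distinguishability, I would again invoke~\cite[Theorem 1]{BBMR14} to obtain that the states of $S$ are pairwise distinguishable by $\{a,b\}^*$-words with respect to an appropriate core final set, and then show, as in the right-ideal argument, that each such distinguishing word can be promoted to one that distinguishes the same pair with respect to $H \circ V$ in $\cD_{m,n}$. The key mechanism is that $f$ sends the first row of $S$ into $H$ and the first column of $S$ into $V$ while fixing the rest of $S$; a suitable $\{a,b\}^*$-prefix can always be chosen to place the ``target'' core state into row or column $2$ before $f$ is applied. Distinguishing a core state from a boundary state is done by counting how many occurrences of $f$ or $e$ are needed to reach $H \cap V$, as in Theorem~\ref{thm:RBool}. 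Finally, the boundary states are handled operation by operation: for $\cap$ and $\oplus$ all boundary states remain pairwise distinguishable, giving $mn$ classes; for $\setminus$ the $m$ states of $V$ become empty and collapse to a single class, giving $mn-(m-1)$; for $\cup$ the $m+n-1$ states of $H\cup V$ all accept $\Sig^*$ and collapse to a single class, giving $mn-(m+n-2)$.

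The main obstacle is transferring distinguishability from the core $S$ to the full final set $H \circ V$. Because two-sided ideals have \emph{two} special states per component DFA, the isolated initial state and the absorbing accepting sink, instead of only one, the boundary of $\cD_{m,n}$ is roughly twice as thick as in the right-ideal proof, and the extension step requires a more delicate composition than the single letter $d$ that sufficed there. Care is also needed to verify that the state-collapses claimed for $\setminus$ and $\cup$ occur exactly, with no accidental further merges among the core states.
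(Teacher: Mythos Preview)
Your reachability argument matches the paper's: reach $(2',2)$ by $e$, fill the core $S$ via~\cite[Theorem~1]{BBMR14}, then pick up Rows $1,m$ and Columns $1,n$ one state at a time using $d$, $e$, $f$, $a$, $b$. (The exceptional parameter pairs for BBMR14 here are $(m-2,n-2)\in\{(3,4),(4,3),(4,4)\}$, handled computationally.)

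For distinguishability your plan differs substantially from the paper. You propose to mirror the right-ideal argument: invoke~\cite[Theorem~1]{BBMR14} on $S$ and then promote the resulting distinguishing words by appending~$f$. That part is sound --- for $(p',q)\in S$ one indeed has $(p',q)f\in H\circ V$ iff $(p',q)$ lies in row~$2'$ or column~$2$ of~$S$, so the lift works exactly as $d$ did for right ideals. The trouble is everything else: the boundary is now $2m+2n-4$ states (Rows $1,m$ and Columns $1,n$) rather than the $m+n-1$ of the right-ideal case, and you give no concrete mechanism for distinguishing core states from Row~$1$/Column~$1$ states, nor boundary states from one another. The ``count occurrences of $f$ or $e$'' heuristic is too coarse to carry this: for instance both $(2',2)$ and $(1',3)$ reach $(m',n)$ by a word containing exactly one~$f$.

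The paper drops BBMR14 for distinguishability altogether and argues directly. Its central device is the observation that for $1\le p<r\le m$, the word $a^{m-r}f^{m-r}$ (read as $\eps$ when $r=m$) sends every state in Row~$r$ to Row~$m$ and every state in any other row to a row different from~$m$; symmetrically $b^{n-s}f^{n-s}$ isolates Column~$s$. With these row/column selectors in hand, each of the four operations is dispatched by a short case analysis (e.g., ``states in distinct rows'' followed by $ea^2f$, or $b^2f$, to finish) that treats core and boundary states uniformly --- no bootstrapping from an inner direct product is needed. Your plan could probably be completed, but it would require an ad~hoc treatment of the thick boundary that the paper's direct approach avoids entirely.
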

\begin{proof}
As before, we take the direct product DFA $\cD_{m,n} = \cD'_m(a,b,-,d,e,f) \times \cD_n(b,a,-,d,e,f)$ and count reachable and distinguishable states.
First we prove all states are reachable, as illustrated in Figure~\ref{fig:2Cross}. State $(1',1)$ is the initial state, and $(2', 2)$ is reached from $(1', 1)$ by  $e$. Since $\{a,b\}$ generates all permutations of $Q'_m\setminus\{1',m'\}$ and $Q_n\setminus\{1,n\}$, by~\cite[Theorem 1]{BBMR14} all states in $(Q'_m\setminus\{1',m' \})\times(Q_n\setminus\{1,n\})$ are reachable, unless $(m-2,n-2) \in \{(2,2),(3,4),(4,3),(4,4)\}$; the case $(2,2)$ does not occur since $m,n \ge 5$, and we have verified the other special cases computationally. Thus it remains to show that all states in Rows $1$ and $m$ and Columns $1$ and $n$ are reachable.

For Row 1, first note that $(1',2)$ is reachable since $( (m-1)', 2)d = (1',2)$.
Then $( 1', q )b = (1', q+1)$
for $2 \le q \le n-2$, so we can reach $(1',3),\dotsc,(1',n-1)$.
Finally, we can reach $(1',n)$ since $( 1', 2)f =  (1',n) $.
A symmetric argument applies to Column 1.
For Row $m$, note $(m',1)$ is reachable since it is in Column 1.
Then we have $(m',1)e = (m',2)$, and $( m', q )b = (m',q+1) $, for $2 \le q \le n-2$, and finally $(m',2)f = (m',n)$.
A symmetric argument applies to Column $n$.

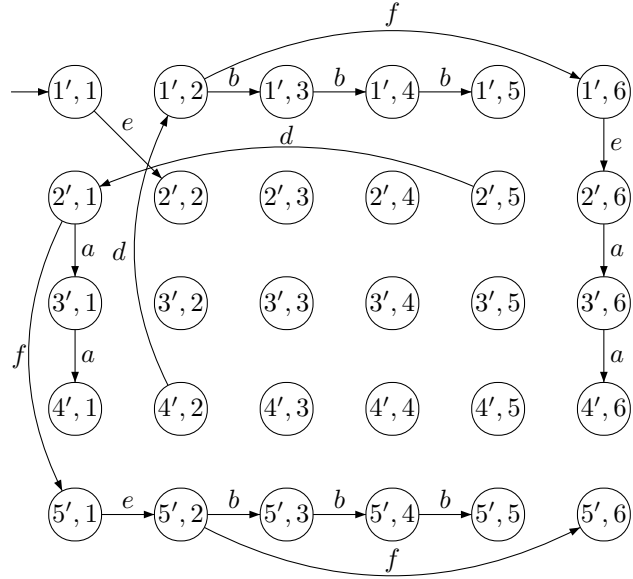
\begin{figure}[ht]
\unitlength 8pt
\begin{center}\begin{picture}(30,27)(0,-8)
\gasset{Nh=2.5,Nw=2.5,Nmr=1.2,ELdist=0.3,loopdiam=1.2}
\node(1'1)(2,15){$1',1$}\imark(1'1)
\node(2'1)(2,10){$2',1$}
\node(3'1)(2,5){$3',1$}
\node(4'1)(2,0){$4',1$}
\node(5'1)(2,-5){$5',1$}

\node(1'2)(7,15){$1',2$}
\node(2'2)(7,10){$2',2$}
\node(3'2)(7,5){$3',2$}
\node(4'2)(7,0){$4',2$}
\node(5'2)(7,-5){$5',2$}

\node(1'3)(12,15){$1',3$}
\node(2'3)(12,10){$2',3$}
\node(3'3)(12,5){$3',3$}
\node(4'3)(12,0){$4',3$}
\node(5'3)(12,-5){$5',3$}

\node(1'4)(17,15){$1',4$}
\node(2'4)(17,10){$2',4$}
\node(3'4)(17,5){$3',4$}
\node(4'4)(17,0){$4',4$}
\node(5'4)(17,-5){$5',4$}

\node(1'5)(22,15){$1',5$}
\node(2'5)(22,10){$2',5$}
\node(3'5)(22,5){$3',5$}
\node(4'5)(22,0){$4',5$}
\node(5'5)(22,-5){$5',5$}

\node(1'6)(27,15){$1',6$}
\node(2'6)(27,10){$2',6$}
\node(3'6)(27,5){$3',6$}
\node(4'6)(27,0){$4',6$}
\node(5'6)(27,-5){$5',6$}

\drawedge[ELpos=40](1'1,2'2){$e$}
\drawedge(1'2,1'3){$b$}
\drawedge(1'3,1'4){$b$}
\drawedge(1'4,1'5){$b$}
\drawedge(2'1,3'1){$a$}
\drawedge(3'1,4'1){$a$}
\drawedge(1'6,2'6){$e$}
\drawedge(5'1,5'2){$e$}
\drawedge[curvedepth=2.2,ELdist=0.4](4'2,1'2){$d$}
\drawedge[curvedepth=-2.4,ELdist=-1](2'5,2'1){$d$}
\drawedge[curvedepth=2.9,ELdist=0.2](1'2,1'6){$f$}
\drawedge[curvedepth=-2.2,ELdist=-0.8](2'1,5'1){$f$}

\drawedge[curvedepth= - 2.9,ELdist=0.2](5'2,5'6){$f$}
\drawedge(5'2,5'3){$b$}
\drawedge(5'3,5'4){$b$}
\drawedge(5'4,5'5){$b$}
\drawedge(2'6,3'6){$a$}
\drawedge(3'6,4'6){$a$}
\end{picture}\end{center}
\caption{Partial direct product for boolean operations on two-sided ideals. 
}
\label{fig:2Cross}
\end{figure}

Before we begin the distinguishability proofs, we make a few observations. Let $(p',q)$ and $(r',s)$ be states with $1\le p < r \le m$; note that $r > 1$.
\be
\item
If $r < m$, the word $a^{m-r}$ sends $(r',s)$ to a state in Row $2$. It sends $(p',q)$ to either Row 1 (if $p = 1$) or Row $p+m-r$ (if $p \ge 2$); in either case $(p',q)$ is not sent to Row $2$ or Row $m$.
\item
If $r < m$, the word $a^{m-r}f$ sends $(r',s)$ to Row $m$, and it  sends $(p',q)$ to the same  row as $a^{m-r}$.
\item
If $r < m$, then $a^{m-r}f^{m-r} = a^{m-r}f$ sends $(r',s)$ to Row $m$, and $(p',q)$ to a row other than Row $m$.  If $r = m$, then $a^{m-r}f^{m-r} = \eps$, the state $(r',s) = (m',s)$ is in Row $m$, and the state $(p',q)$ is not in Row $m$. 
\ee
Thus the word $a^{m-r}f^{m-r}$ will send any state in Row $r$ to Row $m$, and any state not in Row $r$ to a row other than Row $m$. By a symmetric argument, the word $b^{n-s}f^{n-s}$ sends states in Column $s$ to Column $n$, and states not in Column $s$ to a different column, for $1 \le q < s \le n$. We use this fact frequently to distinguish states.

For each boolean operation we now prove that the number of distinguishable states meets the relevant upper bound.
\be
\item
{\bf Intersection/Symmetric Difference.}
For intersection, the only final state is $(m',n)$. For symmetric difference, every state in Row $m$ or Column $n$ except  $(m',n)$ is final, and $(m',n)$ is the only empty state. We can use the same distinguishing words in both cases.
	\be
	\item
	{\bf States in distinct rows, $p < r \le m$}: States $(p', q)$ and $(r', s)$, $q, s \in Q_n $, are distinguishable by first applying $a^{m-r}f^{m-r}$. This sends one of the states to Row $m$, and the other to a different row.  We then apply $ea^2f$: applying $e$ sends the state in Row $m$ to $(m',2)$  or $(m',n)$, and the state not in Row $m$ to $(2',2)$ or $(2', n)$. Then $a^2$ fixes $(m',2)$  or $(m',n)$ and sends the other state to  $(4',2)$ or $(4',n)$. Finally, applying $f$ sends  one state to $(m',n)$, and the other state to $(4',n)$. 
	\item
	{\bf States in distinct columns, $q < s \le n$}: The argument is symmetric if we interchange $a$ and $b$.
	\ee
Since all states are distinguishable, the complexity  is $mn$.

\item
{\bf Difference.}
Here the final states are those which are in Row $m$, but not Column $n$.
States in Column $n$ are indistinguishable and empty, since no word can take any of these states out of Column $n$, and that column 	contains only non-final states. Hence there are at most $mn-(m-1)$ distinguishable states.
	\be
	\item
	{\bf States in Row $m$, $1\le q < s  \le n$}: States $(m',q)$ and $(m',s)$ are distinguishable by $b^{n-s}f^{n-s}$, since $(m',n)$ is the only non-final state in Row $m$.
	\item
	{\bf States in distinct rows, and one state is in Column $n$}:
Any state outside of Column $n$ accepts $eb^2f$, and thus all of these states are non-empty. It follows that all states outside of Column $n$ are distinguishable from the empty states in Column $n$. 
	\item
	{\bf States in distinct rows, and neither state is in Column $n$}: 
    States $(p',q)$ and $(r',s)$, with $p < r \le m$ and $q,s \le n-1$, are distinguishable by $a^{m-r}f^{m-r}$ unless this word sends  $(r',s)$ to $(m',n)$. This occurs only if $r < m$ and $a^{m-r}$ sends  $(r',s)$ to $(2',2)$. In this case, applying $b^2$ after $a^{m-r}$  sends  $(2',2)$ to $(2',4)$, and does not affect the row numbers; thus one of them will be in Row 2 and the other in neither Row 2 nor Row $m$. Then $f$ distinguishes the states.
	\item
	{\bf States in distinct columns}: The arrangement of final and non-final states in Row $m$ matches that of symmetric difference. Thus the argument used for intersection/symmetric difference  applies here also.
	\ee
Since all states in $Q'_m\times(Q_n\setminus\{n\})$ are distinguishable, the complexity of difference is $mn-(m-1)$.

\item
{\bf Union.}
The final states are those in Row $m$ and Column $n$.
All final states are indistinguishable, since they all accept $\Sig^*$.
Therefore there are at most $1+(m-1)(n-1)=mn-(m+n-2)$ distinguishable states. 
\be
\item
{\bf Non-final states in distinct rows}:
Two non-final states $(p',q)$ and $(r',s)$,  with $p < r < m$  and $q,s < n$, are distinguishable by $a^{m-r}f$, unless this word sends $(p',q)$ to Column $n$ (since it also sends $(r',s)$ to Row $m$). This occurs only if $a^{m-r}$ sends $(p',q)$ to Column 2. In this case we can use $a^{m-r}b^2f$ to distinguish the states; this is similar to Case (c) of the difference operation.
\item
{\bf Non-final states in distinct columns}:
These states are distinguishable by a symmetric argument.
\ee
Thus the complexity of union is $mn-(m+n-2)$. 
\ee
\end{proof}

If $m\neq n$, the complexity bounds for boolean operations can be met by using languages from the stream $(L_n(a,b,-,d,e,f) \mid n \ge 5)$ for both arguments. That is, we can meet the bounds for boolean operations without the dialect stream $(L_n(b,a,-,d,e,f) \mid n \ge 5)$.

\goodbreak

\begin{theorem}[Two-Sided Ideals: Boolean Operations, $m\neq n$]
\label{thm:2Bool2}
Suppose $m,n\ge 5$ and $m \ne n$.
\be
\item
The complexity of $L_m(a,b,-,d,e,f) \cap L_n(a,b,-,d,e,f)$ is $mn$.
\item
The complexity of $L_m(a,b,-,d,e,f) \oplus L_n(a,b,-,d,e,f)$ is $mn$.
\item
The complexity of $L_m(a,b,-,d,e,f) \setminus L_n(a,b,-,d,e,f)$ is $mn-(m-1)$.
\item
The complexity of $L_m(a,b,-,d,e,f) \cup L_n(a,b,-,d,e,f)$ is $mn-(m+n-2)$.
\ee
\end{theorem}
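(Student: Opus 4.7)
The plan is to mimic the proof of Theorem~\ref{thm:2Bool}: form the direct product $\cD_{m,n} = \cD'_m(a,b,-,d,e,f) \times \cD_n(a,b,-,d,e,f)$ and argue that all $mn$ states are reachable and that the appropriate distinguishability count is attained for each of the four operations. The structural difference from Theorem~\ref{thm:2Bool} is that both factors now use the same action of $a$ and $b$ on their interiors, rather than the $(a,b)$-swap provided by the dialect, so every argument that relied on that asymmetry must be replaced by one exploiting $m \neq n$.

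For reachability, I would first check the pairs $(m,n) \in \{(5,6),(6,5)\}$ computationally, as these are the only exceptions in~\cite[Theorem~1]{BBMR14} compatible with $m,n \ge 5$ and $m \neq n$: the other exceptions would force $m=n$ or $m<5$. For all remaining pairs, \cite[Theorem~1]{BBMR14} guarantees that every state in the interior rectangle $(Q'_m \setminus \{1',m'\}) \times (Q_n \setminus \{1,n\})$ is reachable from $(2',2)$ by $\{a,b\}$-words, and $(2',2)$ is reached from the initial state by $e$; the boundary Rows $1$ and $m$ and Columns $1$ and $n$ are then filled in by the same $d$-, $e$-, $f$-reductions as in Theorem~\ref{thm:2Bool}. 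For distinguishability, each argument in that proof that uses only $d$, $e$, $f$ and the interior cycle/transposition transfers verbatim; only the steps that exploited the dialect's asymmetry need replacement.

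The replacement exploits that $a$ has cycle length $m-2$ on the interior of $\cD'_m$ and $n-2$ on the interior of $\cD_n$, so $a^{m-2}$ fixes every interior row while acting as the cyclic shift $a^{(m-2)\bmod(n-2)}$ on interior columns, and symmetrically for $a^{n-2}$. Since $m \neq n$, at least one of these shifts is nontrivial, so we can freeze one coordinate of an interior state pair while moving the other. Composing such $a$-powers with $f$ (which sends $2 \to n$ and $2' \to m'$) and $e$ (which collapses non-final interior states to Row/Column $2$) gives the replacements for the words $a^{m-r}f^{m-r}$ and $b^{n-s}f^{n-s}$ of Theorem~\ref{thm:2Bool}, and hence yields distinguishing words for intersection, symmetric difference, difference, and union.

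I expect the main obstacle to be pairs of the form $(m',q)$ versus $(p',n)$ in the intersection and symmetric difference cases, where a bad divisibility coincidence between $m-2$ and $n-2$ might cause the natural candidate word to accidentally identify the two states. In those subcases a short extension by $b$, $d$, or $e$ (in the spirit of the difference case of Theorem~\ref{thm:RBool2}) should resolve the issue, with the computational verification of $(5,6)$ and $(6,5)$ serving as a safety net for any residual coincidences.
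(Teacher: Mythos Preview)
Your overall plan---direct product, \cite[Theorem~1]{BBMR14} for the interior, boundary fill-in, then case analysis---matches the paper, and your reachability argument is correct. The gap is in distinguishability. You mislocate where the dialect asymmetry matters: the word $b^{n-s}f^{n-s}$ admits the trivial replacement $a^{n-s}f^{n-s}$ (since $a$ now cycles both interiors), and the paper uses exactly that. Where the asymmetry is actually exploited in Theorem~\ref{thm:2Bool} is in follow-up words such as $ea^2f$, in which $a^2$ shifts rows while fixing columns; that is what breaks here. Your $a^{m-2}/a^{n-2}$ trick does appear in the paper---it handles your anticipated obstacle via $a^{\min(m,n)-2}f$, separating the pairs $\{(2',2),(m',n)\}$ and $\{(2',n),(m',2)\}$---but it fails on the pair $(2',2)$ versus $(m',2)$, which is the hard case you did not anticipate. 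Separating these for intersection requires a word that moves row~$2$ off of~$2$ while returning column~$2$ to~$2$; with $a$-powers alone that needs a $k$ divisible by $n-2$ but not by $m-2$, and when $(m-2)\mid(n-2)$ (for instance $m=5$, $n=8$) no such $k$ exists. Your fallback does not rescue this: $d$ and $e$ fix both states, $f$ merges them into $(m',n)$, and a single $b$ merely transports the problem to column~$3$.

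The extra ingredient in the paper is to reuse~\cite[Theorem~1]{BBMR14} as a \emph{distinguishability} tool rather than just a reachability tool: since every interior state is reachable from $(2',2)$ by words in $\{a,b\}^*$, there exists $w\in\{a,b\}^*$ with $(2',2)w=(3',2)$. Any such $w$, viewed as a permutation of the interior columns, must fix~$2$, so $(m',2)w=(m',2)$; then $f$ separates the two. The analogous choice $(2',2)w=(2',3)$ is what the paper uses in the difference and union cases.
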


\begin{proof}
Here we consider the DFA $\cD_{m,n} = \cD'_m(a,b,-,d,e,f) \times \cD_n(a,b,-,d,e,f)$.
The proof that all states are reachable is identical to the proof above, except state $(1,n-1)$ is reachable from $(1,n-2)$ by $a$ instead of $b$, and when applying~\cite[Theorem~1]{BBMR14} the special cases we must verify are only $(m-2,n-2) \in \{(3,4),(4,3)\}$, since we are assuming $m \ne n$.
Also, for states $(p',q)$ and $(r',s)$ with $p < r$, the same remark about the word $a^{m-r}f^{m-r}$ applies, i.e., this word sends $(r',s)$ to Row $m$ and $(p',q)$ to a different row. For $(p',q)$ and $(r',s)$ with $q < s$, the word $a^{n-s}f^{n-s}$ sends $(r',s)$ to Column $n$ and $(p',q)$ to a different column (previously we used $b^{n-s}f^{n-s}$ for this purpose).

For each boolean operation we now prove that the number of distinguishable states meets the relevant upper bound.
\be
\item
{\bf Intersection/Symmetric Difference.}
As before, the same distinguishing words can be used for intersection and symmetric difference.
\be
\item
{\bf States in Column $n$}: States $(p', n)$ and $(r', n)$, with $p < r \le m$, are distinguishable by $a^{m-r}f^{m-r}$.
\item
States $(2',2)$ and $(m',2)$ are distinguishable as follows. Since all states in 
$(Q'_m \setminus \{1',m'\}) \times (Q_n \setminus \{1,n\})$
 are reachable from $(2',2)$ using words in $\{a,b\}^*$, there is a word $w \in \{a,b\}^*$ which sends $(2',2)$ to $(3',2)$. 
If we view $w$ as a permutation on $Q_n \setminus \{1,n\}$ alone, it must fix $2$.
Thus $w$ sends $(m',2)$ to $(m',2)$. 
Then $(3',2)$ and $(m',2)$ are distinguished by $f$.
\item
{\bf States in Column $q<n$}: States $(p',q)$ and $(r',q)$, with $p<r\le m$, are distinguishable since the word $a^{m-r}f^{m-r}e$ reduces this case to Case (a) or (b).
\item
By symmetry, states in the same row are distinguishable.
\item
{\bf States in distinct columns}: $(p', q)$ and $(r', s)$, $q < s \le n$, are distinguishable  by first applying $a^{n-s}f^{n-s}e$. This sends $(p',q)$ to some state in Column 2 and $(r',s)$ to some state in Column $n$. If these successor states  are in the same row, then this case reduces to Case (d). Otherwise, since $e$ was applied, the only possible states are $(2',2)$ and $(m',n)$, or $(2',n)$ and $(m',2)$. In either case, these states are distinguished by $a^{\min(m,n)-2}f$. 
\item
By symmetry, states in distinct rows are distinguishable.
\ee
Since all states are distinguishable, the complexity of intersection is $mn$.

\item
{\bf Difference.}
States in Column $n$ are empty and thus indistinguishable.
\be
\item
{\bf States in distinct columns, and one state is in Column $n$}:
All states $(p',q)$, $q < n$ are non-empty, and thus distinguishable from those in Column $n$. To see this, observe that if $p = m$, then $(p',q)$ is a final state and thus is non-empty. If $p < m$ then $e$ sends $(p',q)$ to $(2',2)$. Since every state in 
$(Q'_m \setminus \{1',m'\}) \times (Q_n \setminus \{1,n\})$ is reachable from $(2',2)$, there is a word $w \in \{a,b\}^*$ that sends $(2',2)$ to $(2',3)$. Then $f$ sends $(2',3)$ to the final state $(m',3)$.
\item
{\bf States in distinct columns, and neither state is in Column $n$}:
States $(p',q)$ and $(r',s)$, with $q < s < n$ are distinguishable since applying $a^{n-s}f$ reduces this case to 
Case~(a).
\item
{\bf States in distinct rows, and neither state is in Column $n$}:
States $(p',q)$ and $(r',s)$, with $p < r \le m$ and $q,s$ not both equal to $n$, are distinguishable by $a^{m-r}f^{m-r}$, unless this word sends  $(r',s)$ to $(m',n)$.
This occurs only if  $r < m$ and $a^{m-r}$ sends $(r',s)$ to $(2',2)$.
Recall that $a^{m-r}$ sends $(r',s)$ to Row 2, and sends $(p',q)$ to a row other than Row 2 or Row $m$; so suppose that after applying $a^{m-r}$ the states are $(2',2)$ and  $(i',k)$ for some $i \not\in \{2,m\}$.
There is a word $w \in \{a,b\}^*$ which sends $(2',2)$ to $(2',3)$. 
If we view $w$ as a permutation on $Q'_m \setminus \{1',m'\}$, it sends $2'$ to $2'$, so it sends $i'$ to some $j' \not\in \{2',m'\}$.
So $w$ sends $(2',2)$ and $(i',k)$ to $(2',3)$ and $(j',\ell)$ respectively; since $j' \not\in \{2',m'\}$ these states are distinguished by $f$.
\ee
Since $(Q'_m\times(Q_n\setminus\{n\}))\cup\{(m',n)\}$ is a maximal distinguishable set, the complexity of set difference is $m(n-1)+1=mn-(m-1)$.

\item
{\bf Union.}
All final states are indistinguishable, since they all accept $\Sig^*$.
\be
\item
{\bf Non-final states in the same row}:
Consider states $(p',q)$ and $(p',s)$ with $p < m$ and $q < s < n$.
The word $a^{n-s}$ takes these to $(r',q+n-s)$  or $(r',1)$, and $(r',2)$ respectively for some $r$; note  that in either case $(p',q)$ does not get mapped to Column $2$.
Hence the two states are sent to $(r',i)$ where $i \ne 2$ and $(r',2)$, respectively.
If $r \ne 2$, then these states are distinguished by $f$.

If $r = 2$, there exists a word $w \in \{a,b\}^*$ which maps $(r',2) = (2',2)$ to $(3',2)$.
We can view $w$ as a permutation on $Q_n$ which fixes $1$ and $n$; it sends $2$ to $2$ and sends $i$ to some $j \ne 2$, since it is bijective. 
Hence the other state $(r',i)$ is sent to $(3',j)$ for some $j \ne 2$.
Then $(3',2)$ and $(3',j)$ are distinguished by $f$.
\item
{\bf Non-final states in the same column}: These are distinguishable by a symmetric argument.

\item
{\bf Non-final states in distinct columns}:
Non-final states $(p',q)$ and $(r',s)$, with  $p , r < m$ and $q < s < n$, are distinguishable by $a^{n-s}f$, unless this word sends $(p',q)$ to Row $m$ (since it also sends $(r',s)$ to Column $n$). This occurs only if $a^{n-s}$ sends $(p',q)$ to Row $2$. 
Note that $a^{n-s}$ sends $(r',s)$ to Column $2$; so after applying $a^{n-s}$ the states are $(i',2)$ and $(2',j)$ for some $i$ and $j$.  By (a) and (b), we can assume $i,j \neq 2$. Then $a^{\min(m,n)-2}f$ distinguishes the states.
\item
{\bf Non-final states in distinct rows}:
A symmetric argument works.
\ee
Since all non-final states are distinguishable and all final states are indistinguishable, the complexity of union is $(m-1)(n-1)+1=mn-(m+n-1)$. 

\ee
\end{proof}

\begin{remark}
For regular two-sided ideals, the minimal alphabet required to meet all the bounds has six letters. 
As before, it is possible to reduce the alphabet for some operations~\cite{BJL13}. The sizes are as follows:
reversal (3/4), star (2/3), product (1/3), union (2/5), intersection (2/5), symmetric difference (2/5), and difference (2/5). 

\end{remark}

\section{Conclusions}
In the case of regular right, left, and two-sided ideals,
we have demonstrated that there exist  single witness streams that meet the bounds for all of our complexity measures, and that the only dialects required are those in which two  letters are interchanged; this is needed in the case where the bounds for boolean operations are to be met with witnesses of the same complexity.
\providecommand{\noopsort}[1]{}


\begin{thebibliography}{10}
\providecommand{\url}[1]{\texttt{#1}}
\providecommand{\urlprefix}{URL }

\bibitem{AhCo75}
Aho, A., Corasick, M.J.: Efficient string matching: An aid to bibliographic
  search. Communications of the ACM  18(6),  333--340 (1975)

\bibitem{BBMR14}
Bell, J., Brzozowski, J., Moreira, N., Reis, R.: Symmetric groups and quotient
  complexity of boolean operations. In: Esparza, J., et~al. (eds.) ICALP 2014.
  LNCS, vol. 8573, pp. 1--12. Springer (2014)

\bibitem{Brz10}
Brzozowski, J.: Quotient complexity of regular languages. J. Autom. Lang. Comb.
   15(1/2),  71--89 (2010)

\bibitem{Brz13}
Brzozowski, J.: In search of the most complex regular languages. Int. J. Found.
  Comput. Sci.,  24(6),  691--708 (2013)

\bibitem{Brz16}
Brzozowski, J.: Unrestricted state complexity of binary operations on regular
  languages. In: C{\^a}mpeanu, C., Manea, F., Shallit, J. (eds.) DCFS 2016.
  LNCS, vol. 9777, pp. 60--72. Springer (2016)

\bibitem{BrDa14}
Brzozowski, J., Davies, G.: Most complex regular right ideals. In: J\"urgensen,
  H., Karhum\"aki, J., Okhotin, A. (eds.) DCFS 2014. LNCS, vol. 8614, pp.
  90--101. Springer (2014)

\bibitem{BrDa15}
Brzozowski, J., Davies, S.: Quotient complexities of atoms in regular ideal
  languages. Acta Cybernetica  22(2),  293--311 (2015)

\bibitem{BJL13}
Brzozowski, J., Jir{\'a}skov{\'a}, G., Li, B.: Quotient complexity of ideal
  languages. Theoret. Comput. Sci.  470,  36--52 (2013)

\bibitem{BrCo16}
Brzozowski, J., Sinnamon, C.: Unrestricted state complexity of binary
  operations on regular and ideal languages (2016), {\small\tt
  http://arxiv.org/abs/1609.04439}

\bibitem{BrSz14}
Brzozowski, J., Szyku{\l}a, M.: Upper bounds on syntactic complexity of left
  and two-sided ideals. In: Shur, A.M., Volkov, M.V. (eds.) DLT 2014. LNCS,
  vol. 8633, pp. 13--24. Springer (2014)

\bibitem{BrSz15a}
Brzozowski, J., Szyku{\l}a, M.: Complexity of suffix-free regular languages.
  In: Kosowski, A., Walukiewicz, I. (eds.) FCT 2015. LNCS, vol. 9210, pp.
  146--159. Springer (2015)

\bibitem{BSY15}
Brzozowski, J., Szyku{\l}a, M., Ye, Y.: Syntactic complexity of regular ideals
  (September 2015), {\small\tt http://arxiv.org/abs/1509.06032}

\bibitem{BrTa13}
Brzozowski, J., Tamm, H.: Quotient complexities of atoms of regular languages.
  Int. J. Found. Comput. Sci.  24(7),  1009--1027 (2013)

\bibitem{BrTa14}
Brzozowski, J., Tamm, H.: Theory of \'atomata. Theoret. Comput. Sci.  539,
  13--27 (2014)

\bibitem{BrYe11}
Brzozowski, J., Ye, Y.: Syntactic complexity of ideal and closed languages. In:
  Mauri, G., Leporati, A. (eds.) DLT 2011. LNCS, vol. 6795, pp. 117--128.
  Springer Berlin / Heidelberg (2011)

\bibitem{CrHa90}
Crochemore, M., Hancart, C.: Automata for pattern matching. In: Rozenberg, G.,
  Salomaa, A. (eds.) Handbook of Formal Languages, vol.~2, pp. 399--462.
  Springer (1997)

\bibitem{CHL07}
Crochemore, M., Hancart, C., Lecroq, T.: Algorithms on Strings. Cambridge
  University Press (2007), 392 pages

\bibitem{HoKo04}
Holzer, M., K\"{o}nig, B.: On deterministic finite automata and syntactic
  monoid size. Theoret. Comput. Sci.  327(3),  319--347 (2004)

\bibitem{Iva16}
Iv\'an, S.: Complexity of atoms, combinatorially. Inform. Process. Lett.
  116(5),  356--360 (2016)

\bibitem{Myh57}
Myhill, J.: Finite automata and representation of events. Wright Air
  Development Center Technical Report  57--624 (1957)

\bibitem{Ner58}
Nerode, A.: Linear automaton transformations. Proc. Amer. Math. Soc.  9,
  541--544 (1958)

\bibitem{Pic39}
Piccard, S.: Sur les bases du groupe sym\'etrique. \v{C}asopis pro
  p\v{e}stov\'an\'i matematiky a fysiky  68(1),  15--30 (1939)

\bibitem{Pin97}
Pin, J.E.: Syntactic semigroups. In: Handbook of Formal Languages, vol.~1:
  Word, Language, Grammar, pp. 679--746. Springer, New York, NY, USA (1997)

\bibitem{YCDLK06}
Yu, F., Chen, Z., Diao, Y., Lakshman, T.V., Katz, R.H.: Fast and
  memory-efficient regular expression matching for deep packet inspection. In:
  Proceedings of the 2006 ACM/IEEE Symposium on Architecture for Networking and
  Communications Systems $($ANCS\/$)$. pp. 93--102. ACM, New York (2006)

\bibitem{Yu01}
Yu, S.: State complexity of regular languages. J. Autom. Lang. Comb.  6,
  221--234 (2001)

\end{thebibliography}
\end{document}